\definecolor{light-gray}{gray}{0.95}
\newcommand{\bigO}{\mathcal{O}}
\title{Toward Self-Adjusting $k$-ary Search Tree Networks} 
\author{Evgeniy Feder}{ITMO University, Russia}{dzaba67@yandex.ru}{}{}
\author{Anton Paramonov}{EPFL, Switzerland}{anton.paramonov2000@gmail.com}{}{}
\author{Pavel Mavrin}{Paphos University, Cyprus}{pavel.mavrin@gmail.com}{}{}
\author{Iosif Salem}{TU Berlin, Germany \and ZeroPoint Technologies AB}{iosif.salem@gmail.com}{}{}
\author{Vitaly Aksenov}{City, University of London, UK}{aksenov.vitaly@gmail.com}{}{}
\author{Stefan Schmid}{TU Berlin, Germany}{stefan.schmid@tu-berlin.de}{}{}
\authorrunning{Feder et al.}
\keywords{self-adjusting networks, networks, splay-tree, k-ary tree}
\begin{document}

\maketitle

\begin{abstract}
Datacenter networks are becoming increasingly flexible with the incorporation of new optical communication technologies, such as optical circuit switches, enabling self-adjusting topologies that can adapt to the traffic pattern in a demand-aware manner.
In this paper, we take the first steps toward demand-aware and self-adjusting $k$-ary tree networks. These are more powerful generalizations of existing binary search tree networks (like SplayNet~\cite{splaynet-paper}), which have been at the core of self-adjusting network (SAN) designs. $k$-ary search tree networks are a natural generalization offering nodes of higher degrees, reduced route lengths, and local routing in spite of reconfigurations (due to maintaining the search property). 

Our main results are two online heuristics for self-adjusting $k$-ary tree networks.
Empirical results show that our heuristics work better than SplayNet in most of the real network traces and for average to low locality synthetic traces, and are only a little inferior to SplayNet in all remaining traces. 
We build our online algorithms by first solving the offline case. First, we compute an offline (optimal) static demand-aware network for arbitrary traffic patterns in $\bigO(n^3 \cdot k)$ time via dynamic programming, where $n$ is the number of network nodes (e.g., datacenter racks), and also improve the bound for the special case of uniformly distributed traffic. Then, we present a centroid-based approach to demand-aware network designs that we use both in the offline static and online settings. In the offline uniform-workload case, we construct this centroid network in linear time $\bigO(n)$.

\end{abstract}

\newpage

\section{Introduction}

With more services being offloaded to the cloud 
and the ever increasing numbers of connected devices to the Internet, 
inter- and intra-datacenter traffic grows exponentially.
Therefore, datacenter network design has been attracting a lot of attention.
Traditional datacenter network designs are static and perform well only under certain workloads (e.g. all-to-all).
However, datacenter traffic follows patterns which can be exploited in the design of more efficient networks.
It has been shown that a small fraction of network nodes accounts for a large fraction of datacenter traffic, the traffic distribution is sparse, and it exhibits locality features that change over time \cite{avin2020complexity}.

As a result, innovative dynamic datacenter network topologies have emerged.
Supported by advances in networking hardware (e.g., optical circuit switches, 60 GHz wireless, or even experimental designs based on free-space optics as in \cite{projector-paper}), physical network topologies now have the ability to \textit{self-adjust}. That is, the physical network topology is now programmable and can be reconfigured to serve traffic more efficiently. Interestingly, leading cloud providers have already attempted to incorporate dynamic networks into their datacenters \cite{projector-paper, poutievski2022jupiter}.

This flexibility provided by networking hardware raises an optimization challenge: how to optimally adjust the topology to improve network efficiency in terms of routing cost?
There is a trade-off between the cost of changing the network topology (reconfiguration cost) and the benefit of reducing the distance of frequently communicating racks, henceforth called nodes (routing cost). We assume that the input of this optimization problem is a sequence of communication requests. In the online case, where the input is revealed piecewise and future communication demand is unknown, we would opt for topology updates that are likely to pay off in the future. In the static case, we aim at computing an optimal demand-aware network topology with low time complexity.

The developing field of Self-Adjusting Networks (SANs) aims to address these optimization challenges. SANs often assume a family of allowed topologies, e.g., trees \cite{avin2022deterministic, DBLP:conf/infocom/lazySANsInfocom22, cbnet-paper, peres2021distributed, splaynet-paper}, skip lists \cite{avin2020working}, bounded degree graphs \cite{avin2022demand, avin2021renets}, etc., within which the network has to remain. This restriction is not only practically motivated (e.g., optical switches are of bounded degree), but also simplifies algorithm design and allows for theoretical performance guarantees.
Specifically, self-adjusting tree networks have been at the core of SAN designs.
SplayNet \cite{splaynet-paper}, a self-adjusting binary search tree network generalizing splay trees \cite{splaytree-paper}, was the first proposed SAN. SplayNet has been extended to ReNet \cite{avin2021renets}, a statically optimal SAN for sparse communication patterns, but also to a distributed version, DiSplayNet \cite{peres2021distributed}. 
The search property is particularly useful for SANs, since maintaining it allows local and greedy routing despite changes in the topology.

Online SAN algorithms can vary from being fully reactive, in which case they reconfigure the topology after every communication request~\cite{avin2020working, splaynet-paper} (e.g., when traffic is bursty), to being partially reactive~\cite{DBLP:conf/infocom/lazySANsInfocom22}, in which case they update the topology periodically. 
In \cite{DBLP:conf/infocom/lazySANsInfocom22}, the topology changes every time the routing cost reaches a threshold $\alpha$ since the last topology update, the new topology is computed using SplayNet, and it remains static until the routing cost reaches the threshold again.
This approach can be generalized to a meta-algorithm, where the topology changes upon a new chunk of the input, a subroutine is used to decide the new topology, and in between reconfigurations it remains static. 
Therefore, the efficient computation of static \textit{demand-aware} topologies is also relevant in online SAN algorithm design.   



In this work, we take the first steps to generalizing binary to $k$-ary search tree networks, 
since they provide higher node degrees and shorter routes than binary search trees (BSTs) for a fixed number of nodes, in addition to local and greedy routing regardless of reconfigurations due to the search property. 
We present offline static and online self-adjusting networks and evaluate our newly proposed SANs experimentally. 

We note that designing $k$-ary search tree networks is different than designing a $k$-ary search tree. This holds due to the need of having a node identifier that stays the same across rotations in the network case 
(the assignment of identifiers to nodes is a bijection and does not change)
as each tree node represents a network node.
That is, in the case of $k$-ary search trees each node contains up to $k - 1$ keys which are used for routing (traversing from root to searched key) in the data structure and are also the data.
In contrast, in the network case each node can use up to $k - 1$ nodes' keys for routing (hence called routing keys), but each node should have an extra fixed key that serves as its identifier (node key) which is the data.
Self-adjusting
$k$-ary search trees have been studied, for example, by Sherk in \cite{sherk1995self} and by Martel in \cite{martel1991self}.
However, none of the existing approaches, to the best of our knowledge, apply in our case due to the requirement of having one key (identifier) per node.

\noindent\textbf{Contributions.}
We present offline static and online self-adjusting $k$-ary search tree networks. 
As for the static results, we have:

1. We construct a static ``almost-optimal'' $k$-ary tree network in $O(n^3 \cdot k)$ time using dynamic programming. Then, we reduce the complexity to $O(n^2 \cdot k)$ for the special case of uniformly distributed traffic to obtain optimal tree. The latter case is non-trivial, since we do not restrict the topology to full balanced trees.

2. We present a new static $k$-ary tree network topology which is built by $k+1$ trees with almost equal size connected around a centroid node. 
We present a linear offline algorithm that constructs it and prove that its total cost is close to the optimal.

As for the dynamic case, we have:

1. We then present two novel variants for online $k$-ary self-adjusting search tree networks: $k$-ary SplayNet, which is a generalization of SplayNet, and $(k+1)$-SplayNet which is obtained by applying a centroid heuristic. For both networks, we propose novel rotation operations that allow network nodes to keep their identifiers across tree rotations.
To achieve the latter, we distinguish the set of node identifiers from the set of routing keys.
We show how the known upper bounds for SplayNet can be applied for our networks.

2. We perform two types of experiments with synthetic network traces and real ones from datacenter network traffic. First, we show that our $k$-ary SplayNet indeed has better routing cost than SplayNet. Then, we compare 3-SplayNet to SplayNet, and to static demand-aware and demand-oblivious trees. The results show that 3-SplayNet (centroid topology) performs better than the standard SplayNet in most real network traces and in synthetic ones that have average to low temporal locality, while its performance is similar to SplayNet for the remaining datasets.

\textbf{Related work. } 
Self-Adjusting Networks were introduced with SplayNet \cite{splaynet-paper}. SplayNet is a binary search tree network that generalizes Splay Trees \cite{splaytree-paper}. SplayNet uses the tree rotations of splay trees to reduce the distance of communicating nodes to one; it uses splay operations to move the source and destination to their lowest common ancestor for each communication request. The same paper presents a dynamic programming algorithm for computing an optimal binary search tree network when the demand is known, among other results. SANs were further surveyed and classified in \cite{avin2019toward}. The authors present a classification of network topologies, which depends on whether they are (i) oblivious to or aware of traffic patterns, (ii) fixed or reconfigurable, and (iii) aware of the input sequence of communication requests (offline, online, generated by a distribution). This taxonomy allows for optimizing for certain properties according to each case, e.g. diameter or competitive ratio. A survey including first solutions and enabling networking technologies can be found in \cite{hall2021survey}.

Tree-based SANs were further studied following SplayNet, due to being more easy to analyze and deploy. 
ReNets \cite{avin2021renets} are bounded-degree SANs based on combining ego-trees, which are trees where the source is a node and the remaining nodes are the destinations to which the source node has communicated with. In this design, ego-trees are stars or splay trees, depending on whether the number of destinations exceeds the degree bound. ReNets achieve static optimality for sparse communication patterns, which is a desirable optimality property \cite{avin2013locally, avin2019toward}.
Ego-trees were further studied in the form of self-adjusting single-source tree networks in \cite{avin2022deterministic, avin2022push}, which provided a number of constant competitive (dynamically optimal) randomized and deterministic algorithms with good experimental performance.
SplayNet has also been the basis of distributed tree SANs \cite{cbnet-paper, peres2021distributed} and of SANs in a cost model with non-unit cost for changing a link in the topology \cite{DBLP:conf/infocom/lazySANsInfocom22}.  
All the results mentioned above are for binary tree networks.

Alternative directions have also been studied.
\cite{avin2019demand} studied how to construct offline SANs when the demand is known, under certain assumptions on the communication patterns (e.g. sparse demand). 
\cite{griner2021cerberus} presents a topology adjustment algorithm that uses static or dynamic topologies according to the identified traffic patterns (latency-sensitive, all-to-all, elephant flows).
SANs that are not tree-based  have also been studied, e.g. Skip List Networks \cite{avin2020working}.


\textbf{Paper organization. } 
In Section~\ref{sec:model}, we introduce all the necessary definitions.
In Section~\ref{sec:static}, we explain how to build demand-aware optimal tree network using dynamic programming and how to build a quasi-optimal tree for the uniform workload.
In Section~\ref{sec:online}, we present novel rotations for $k$-ary SplayNet and present two heuristics.
In Section~\ref{sec:experiments}, we experimentally evaluate the cost of our new network structures.
Finally, we conclude in Section~\ref{sec:conclusion}.


\section{Model}
\label{sec:model}

We consider a network of $n$ nodes $V = \{1, \ldots, n\}$ (e.g., top-of-the-rack switches in a datacenter networks) and a finite or infinite communication sequence $\sigma = (\sigma_1, \sigma_2, \ldots)$, where  $\sigma_t = (u, v) \in V^2$ is a communication request from source $u$ to destination $v$.
%
The network topology $G$ must be chosen from a family of desired topologies $\mathcal{G}$, for example, search trees, expander graphs, etc. Each topology $G \in \mathcal{G}$ is a graph $G = (V, E)$. 
The routing cost of $\sigma_t$ is given by the distance between the two endpoints in the topology when serving the request.
The topology can be reconfigured between requests with a cost equal to the number of links (edges) added or removed.
The total service cost of $\sigma$ is the sum of routing and reconfiguration costs.
Our goal is to serve the communication sequence with minimum total cost.



We distinguish two problem variants. In the \textbf{offline static} variant, $\sigma$ is known in the form of an $n\times n$ demand matrix $D$, but no reconfiguration can occur. The matrix entry $D[u,v]$ is the number of requests from $u$ to $v$ in $\sigma$. 
We have to build a network topology $G_{static} \in \mathcal{G}$ that does not change during or in between requests. Such a graph $G_{static}$ needs to optimize the total distance function,
	\begin{align*}
		\mathrm{TotalDistance}(D, G_{static}) = \sum\limits_{(u, v) \in [n]^2}d_{G_{static}}(u, v)\cdot D[u, v]
	\end{align*} 
	where $d_{G_{static}}(u, v)$ is the distance between nodes $u$, $v$ in $G_{static}$ and $[n] = \{1,\ldots, n\}$.

In the \textbf{online self-adjusting} variant, $\sigma$ is not known in advance but revealed piecewise, and we can change the topology after serving a request. We are provided with an arbitrary initial network (before the first request arrives), which we denote by $G_0 \in \mathcal{G}$. Our task is to build an online algorithm $\mathcal{A}$ that adjusts the network $G_i$ at every time instant $i=1,\ldots,m$ and minimizes the total cost, which is calculated as $\mathrm{sumCost}(\mathcal{A}, G_0, \sigma) = \sum_{i=1}^{m} \left(\mathrm{routingCost}(G_{i-1}, \sigma_i)\right.$ $+$ $\mathrm{adjustmentCost}(G_{i-1}$, $\left. G_i)\right)$, where $\mathrm{routingCost}(G_{i-1}, \sigma_i)$ is the path length in edges of $G_{i-1}$ to route request $\sigma_i$ and $\mathrm{adjustmentCost}(G_{i-1}, G_i)$ is the adjustment cost to reconfigure the network from step $i-1$, $G_{i-1} \in \mathcal{G}$, to step $i$, $G_i \in \mathcal{G}$, i.e., the number of edges added or deleted. Note that $\mathcal{A}$ can skip a reconfiguration step.


This paper focuses on both problem variants when the set of allowed topologies $\mathcal{G}$ is the set of $k$-ary search trees. These trees are the generalization of binary search trees, which were investigated in~\cite{splaynet-paper} in the context of SANs. The main advantage of using search trees as self-adjusting network topologies is that we can  route locally and greedily: given a destination identifier (or address), each node can decide locally to which neighbor to forward the packet using the search property.
This is particularly useful in the online setting, as routing tables do not need to be updated upon reconfiguration: a node given a packet can just use the information from the routing keys to forward the packet, accordingly.
Also, with increasing $k$, route lengths decrease and node degrees increase.

\begin{definition}
\label{def:karysearchtree}
(i) A \emph{$k$-ary Search Tree} is a rooted tree on keys (node identifiers) $1, \ldots, n$, where each node stores a key (node identifier), a routing array $r=(r_1,r_2,\ldots, r_{k-1})$ containing routing elements (not keys), and has at most $k$ children defined by $r$ as follows: keys of nodes in the $i$-th child are between $r_i$ and $r_{i+1}$ for $i \in [1, k-2]$, keys of nodes to the left of $r_1$ are smaller than $r_1$, and keys of nodes to the right of $r_{k-1}$ are larger than $r_{k-1}$.
Note that the key does not necessarily belong in the routing array.

(ii) A \emph{routing-based} $k$-ary Search Tree is a $k$-ary search tree in which the node identifiers are contained in the routing array.
\end{definition}

\begin{wrapfigure}{r}{0.5\textwidth}
    \centering
    \includegraphics[width=\linewidth]{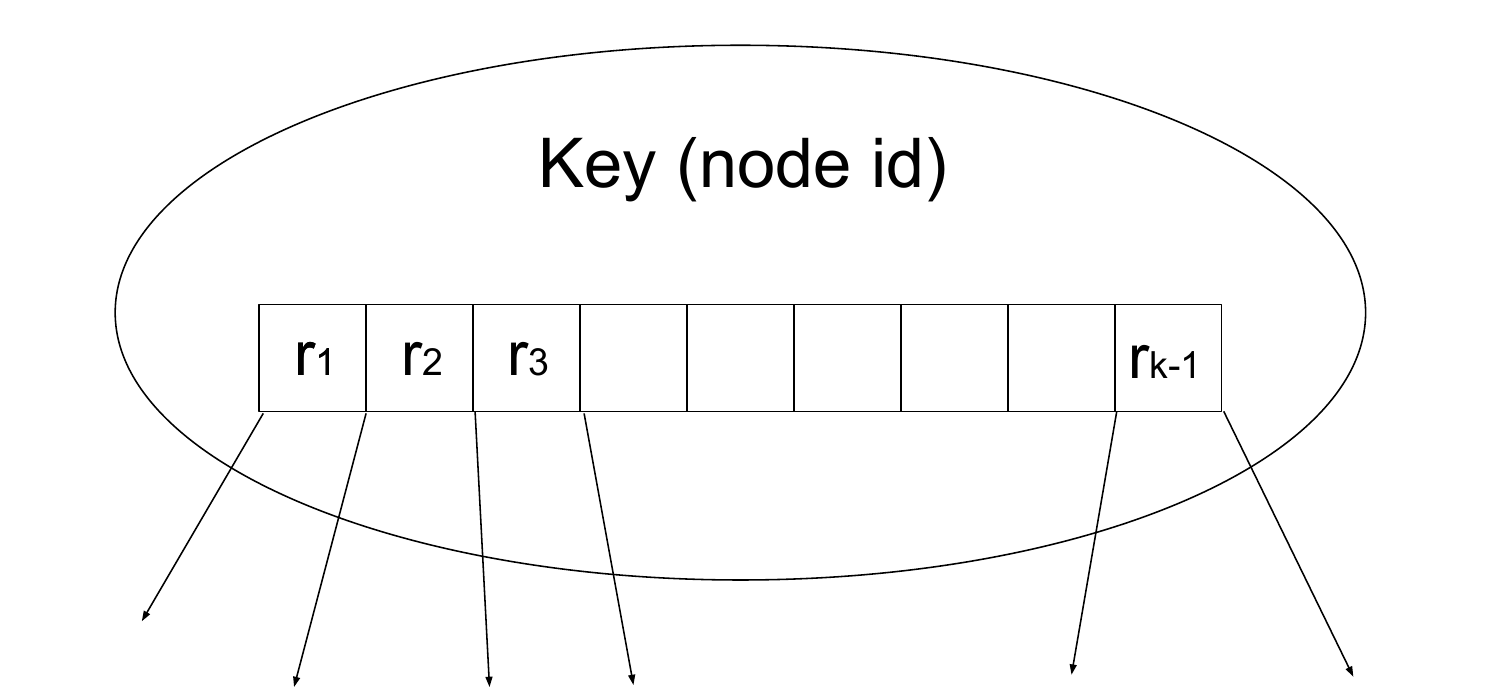}
    \caption{Node in $k$-ary search tree}
\end{wrapfigure}

Definition \ref{def:karysearchtree} defines the standard search route in a $k$-ary tree, when starting from the root. A routing path between two nodes is the unique path connecting them. Routing from a node containing source routing key $i$ to a destination node containing routing key $j$ occurs by following the upward (reverse search) path until their lowest common ancestor and then the standard downward search path to the destination. 
The local transformations of search tree networks are called rotations. A rotation in a $k$-ary search tree changes some adjacency relationships, while keeping subtrees intact and maintaining the search property (Definition \ref{def:karysearchtree}). Such rotations can be implemented in a different manner. We introduce $k$-splay, a novel rotation procedure, in Section~\ref{sec:ksplaynet}.

\section{Optimal static $k$-ary search tree networks}
\label{sec:static}

In this section we construct an optimal static \emph{routing-based} $k$-ary search tree network via dynamic programming.
By analysing the optimal tree for a uniform workload, we show a linear-time but non-exact offline static algorithm that constructs a topology with a centroid node and $k+1$ trees connected to it for the chosen workload.
The latter result will be the basis for an online self-adjusting network proposed in Section \ref{sec:ksplaynet}.

\subsection{Dynamic programming algorithms}
\label{sec:dp}
As our first result, we construct an offline (optimal) static routing-based $k$-ary search tree network. The algorithm is similar to the one for binary search tree from~\cite{splaynet-paper}: it just works as the dynamic programming on segments but a little bit more involved. Its complexity is $O(n^3 \cdot k)$.

\begin{theorem}
\label{thm:arbitrary workload}
An offline static routing-based $k$-ary Search Tree network, i.e., one with the minimal total distance given the requests in advance, can be constructed in $O(n^3 \cdot k)$.
\end{theorem}

\begin{remark}
  We remark that computing an optimal static non-routing-based $k$-ary search tree network is an open problem. That is, there is neither a trivial dynamic programming extension for it nor an NP-hardness proof.
\end{remark}

Then, we decided to consider a uniform workload. That workload is an infinite workload where each pair of nodes is requested uniformly at random.
In this case, we can improve the algorithm to $O(n^2 \cdot k)$ since the dynamic programming in this case does not depend on the position of the segment and depends only on its length.
Our goal is to find a static $k$-ary search tree that serves an infinite uniform workload as fast as possible, i.e., the expectation of the cost of each query is minimal.
So, we can represent the uniform workload as a workload with the demand matrix filled with ones.

\begin{theorem}
\label{thm:uniform workload}
An offline static $k$-ary Search Tree network for the uniform workload can be constructed in $O(n^2 \cdot k)$.
\end{theorem}

We remark that the resulting tree for the uniform workload is not required to be \emph{routing-based} as in the generic case.

More details on these algorithms appear in Appendix~\ref{app:dp}.

\subsection{Centroid static $k$-ary search tree network in $O(n)$ time on uniform workload}
\label{sec:quasiopt}

Now, we present a linear-time construction of an almost-optimal offline $k$-ary search tree network for the uniform workload. We show the cost difference to the optimal tree in Theorem \ref{thm:quasiopt} and comment on experimental results in Remark \ref{rem:expcentroid}. 
This construction will serve us as a basis of an online heuristic in Section~\ref{sec:ksplaynet}.
In general, faster computations of static network topologies are relevant for scaling to larger self-adjusting networks for which we compute new (demand-aware) topologies periodically.
Also, the uniform workload is relevant to the all-to-all traffic pattern.

A $k$-ary search tree can be split in levels: the $i$-th level consists of nodes that are at distance $i - 1$ from the root.
The tree is \emph{weakly-complete} when all its levels, except for the last one, are fully filled (i.e., the $i$-th level has $k^{i-1}$ nodes).
Nodes on the last level can be distributed arbitrarily as long as the search property holds.
%
%

Since in this case we consider only the uniform workload we can ignore the search property: we can first fix the tree structure and then distribute the keys so that the search property is respected.
Our goal is to find an optimal $(k+1)$-degree tree (instead of a $k$-ary search tree) that minimizes $\mathrm{TotalDistance}(D_{uniform}, T) = \sum\limits_{(u, v) \in [n]\times[n]}d_{T}(u, v)$. A $(k+1)$-degree tree is a non-rooted tree where each node has at most $k+1$ neighbours. Such trees represent the same set of trees as $k$-ary search trees: you can root a $(k+1)$-degree tree by a leaf and obtain a $k$-ary search tree (by correctly distributing keys).
Intuitively, we propose a topology where the root has $k+1$ children, in contrast to common a $k$-ary tree with a $k$-degree root, with the potential of reducing the total cost of routing communication requests.

\begin{definition}
A \emph{centroid} $(k+1)$-degree tree is a tree with the root having $k + 1$ weakly-complete $k$-ary trees. All the levels of the tree are fully filled, possibly except for the last one. We can change the relative positions of subtrees such that the leaves on the last level are all grouped together to the left. The tree is shown on Figure~\ref{fig:quasi-optimal-tree-main}.
\end{definition}

\begin{wrapfigure}{r}{0.5\textwidth}
    \centering
    \includegraphics[width=\linewidth]{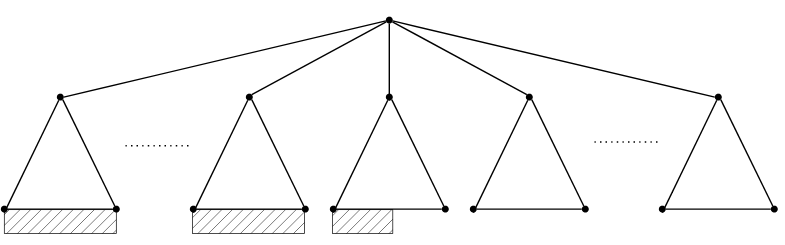}
    \caption{A centroid tree after the reposition of subtrees. Lined rectangles represent leaves.}
    \label{fig:quasi-optimal-tree-main}
\end{wrapfigure}




We prove (Theorem \ref{thm:quasiopt}) that the centroid $(k+1)$-degree tree $T$ has total distance in the uniform workload close to the total distance in the optimal $(k+1)$-degree tree.
Intuitively, the proof works as follows. First, we show that all sibling subtrees, i.e., subtrees with the same parent, should either have the same height or their heights should be different by one. If not, we can move the leaf from one subtree to another and decrease the total cost. The latter move does not always decrease the cost, but sometimes it leads to a small overhead. Then, we show that if for some subtree not all leaves on the last level are aligned to the ``left'', then we can move them and decrease the cost. By that we show that the centroid tree is almost optimal.
The complete proof is provided in Appendix~\ref{app:quasiopt}.
Here we just state the main results.

\begin{theorem}
\label{thm:quasiopt}
Assuming $k$ is a constant, the difference in the total distance between an optimal $(k+1)$-degree tree $T$ and our centroid $(k+1)$-degree tree is $O(n^2k\log k)$ while the total distance in the optimal $(k+1)$-degree tree is $\Omega(n^2\log n)$.
\end{theorem}


\begin{remark}
We can get a $k$-ary search tree out of $(k+1)$-degree centroid tree by rooting at some leaf and setting the identifiers correspondingly. We name such a tree a \emph{centroid $k$-ary search tree}.

Since we consider the uniform workload we know that our centroid tree has the total cost of requests close to the optimal, i.e., misses by at most $O(n^2)$ while the total optimal cost is $\Omega(n^2 \log n)$. Thus, our centroid $k$-ary search tree has an approximation ratio $1+O(\frac{1}{\log n})$.
\end{remark}

Then, it is quite straightforward to build that centroid tree.
\begin{theorem}
The centroid $k$-ary search tree can be built in $O(n)$ time.
\end{theorem}

Finally, we show that the full $k$-ary tree also has total distance close to the cost of the optimal tree.
\begin{lemma}
The total distance in the full $k$-ary tree and the total distance in the centroid $(k+1)$-degree tree are both $n^2\log_kn + O(n^2)$. That is, their total distance differs from the total distance in the optimal tree by $O(n^2)$.
\end{lemma}

\begin{remark}
\label{rem:expcentroid}
The results of the last Lemma show that the full and centroid trees are close to the optimal. However, in the uniform workload the centroid tree should have better total cost, since we split in the centroid vertex by $k+1$ balanced subtrees.
In our experiments, we found that our \emph{centroid} $k$-ary search tree \textbf{is indeed optimal} for all $n$ less than $10^3$ when $k$ is up to $10$, but we were not able prove its optimality formally.
\end{remark}

\section{Online self-adjusting $k$-ary search tree networks}
\label{sec:online}


\label{sec:ksplaynet}



We present online algorithms for self-adjusting $k$-ary search tree networks. 
%
The first one is the $k$-ary SplayNet, which is a self-adjusting network based on a $k$-ary search tree and a generalization of SplayNet~\cite{splaynet-paper}. We prove that $k$-ary SplayNet has the same complexity bounds as SplayNet (we show its benefits experimentally in Section \ref{sec:experiments}).   
The second one is $(k+1)$-SplayNet, a centroid-based structure, that is based on $k$-ary SplayNet and the centroid $k$-ary search tree network presented in the previous section.

\subsection{$k$-ary SplayNet}

In the literature, only one proposal for $k$-ary self-adjusting trees exists~\cite{sherk1995self}. The tree rotations proposed in~\cite{sherk1995self} cannot be directly generalized to SANs: multiple keys appear in each node and a node's keys change upon a rotation, so they cannot be used as network node identifiers and it is not clear how to maintain node identifiers across tree rotations.
%
In this section, we propose new splay operations: $k$-semi-splay and $k$-splay. These rotations mimic the rotations in the binary splay tree and allow for persistent node identifiers, while re-shuffling routing arrays.
For example, the routing array of the node with identifier $X$ in Figure \ref{fig:semi-splay-1} is $(a_1, a_2, a_3, \ldots, a_{k-1})$ and key $X$ has value in between $a_3$ and $a_4$.
Thus, any key can be located by the search property and the routing algorithm is the same as in SplayNet \cite{splaynet-paper}: we rotate the source and destination to the lowest common ancestor and then route the request through a direct link. We now present splay operations that preserve the search property.


\vspace{0.1cm}
\begin{minipage}{0.5\textwidth}
    \centering
    \includegraphics[height=6cm]{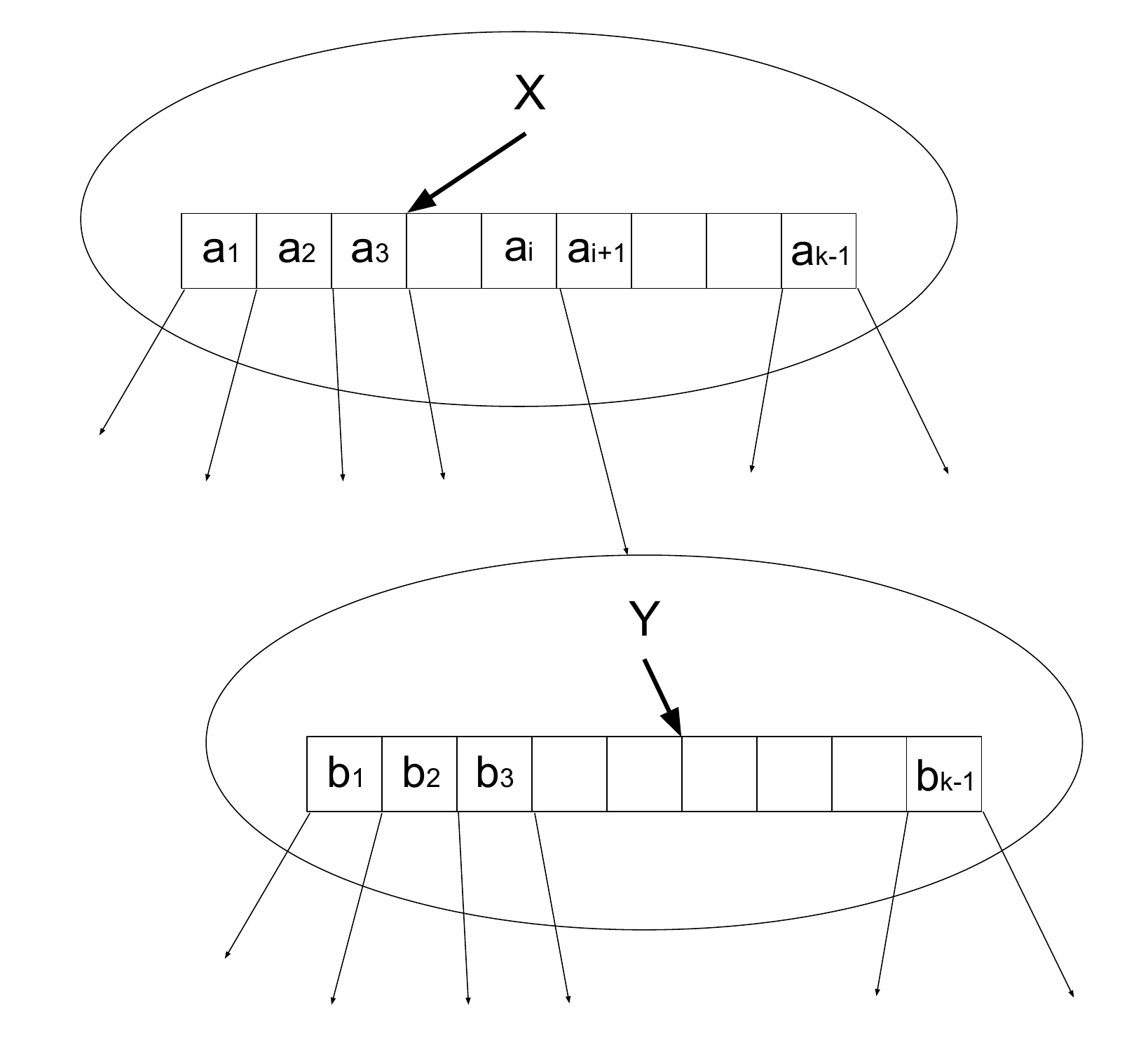}
    \captionof{figure}{The initial state for $k$-semi-splay.}
    \label{fig:semi-splay-1}
\end{minipage}%
\begin{minipage}{0.5\textwidth}
    \centering
    \includegraphics[height=6cm]{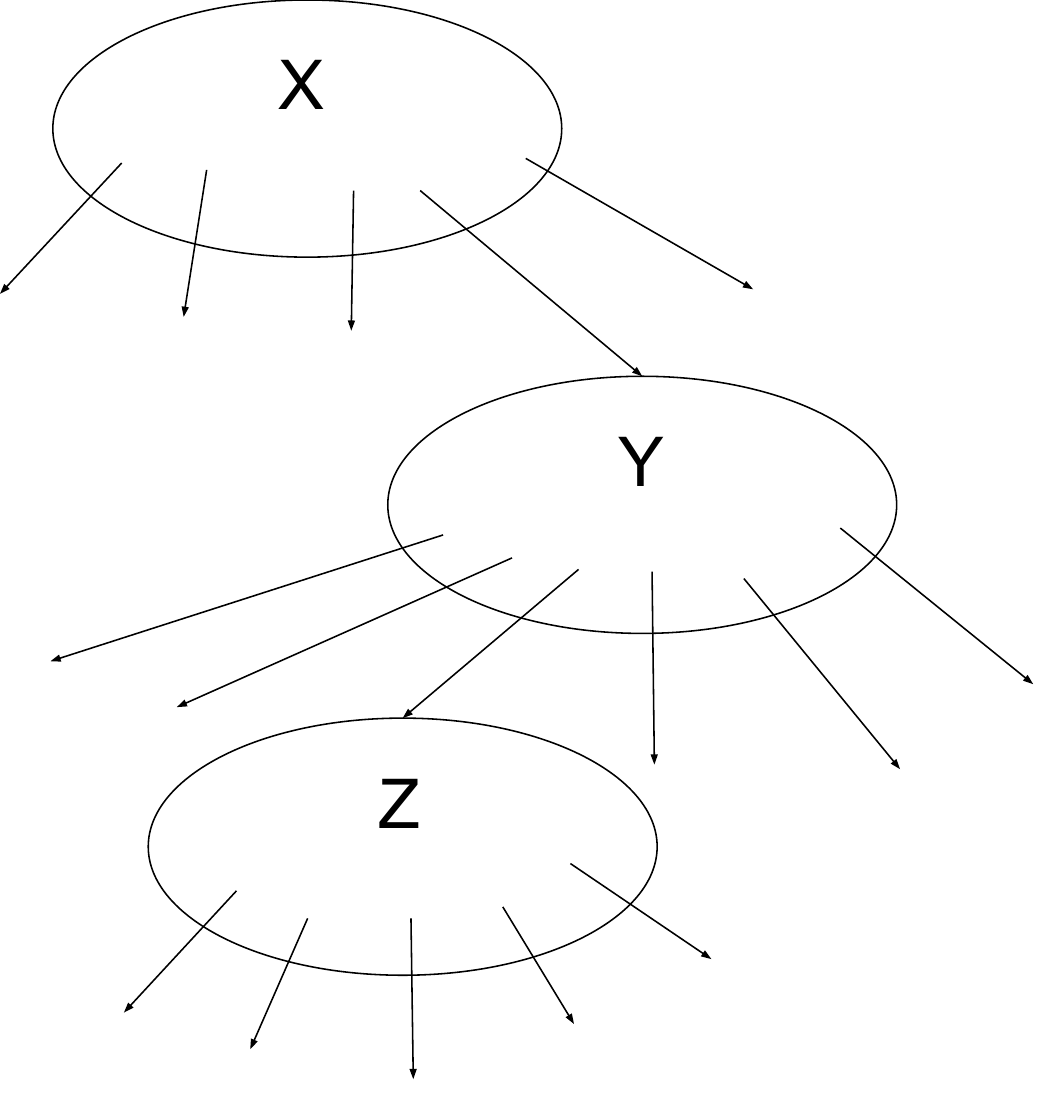}
    \captionof{figure}{Example state before $k$-splay.}
    \label{fig:splay-1}
\end{minipage}
\vspace{0.1cm}


We first present $k$-semi-splay, which generalizes the zig and zag operations in splay trees \cite{splaytree-paper}. 
Suppose that we have two nodes: a node with id $X$ and a child node with id $Y$ (Figure~\ref{fig:semi-splay-1}).
Our goal is to make $Y$ a parent of $X$.
To that end, we merge the routing arrays from these nodes and search for the position for $X$ in this array.
Then, we take $X$ as a key and some $k-1$ consecutive routing elements ``covering'' $X$ as a routing array, i.e., $X$ as the value lies in the segment built on a consecutive set of $k - 1$ elements. 
Finally, we set this node as a child of a new node with key $Y$ and the routing elements left.

Now, we explain the $k$-splay rotation, which generalizes the combination of two zig or zag operations in splay trees. Suppose we have three nodes with identifiers $X$, $Y$, and $Z$ (Figure~\ref{fig:splay-1}), and we want to make $Z$ the top node.
First, we merge the routing arrays of these three nodes into one array and find the positions of $X$ and $Y$ there.
There are two cases: 1)~$X$ and $Y$ are located distant to each other, i.e., separated by more than $k-1$ routing elements; or 2)~$X$ and $Y$ are close to each other.
In the first case, we make two new nodes: one with $X$ and $k-1$ consecutive routing elements ``covering'' $X$; and the one similar for $Y$.
Finally, we set these new nodes as children of a node with key $Z$ and routing elements left (Figure~\ref{fig:splay-2}).
In the second case, we make two new nodes: the one with $Y$ is a parent of the new node with $X$.
Then, we set the node with $Y$ as a child of a node with key $Z$ and routing elements left (Figure~\ref{fig:splay-3}).

\begin{figure}
    \begin{minipage}{0.5\textwidth}
    \centering
    \includegraphics[width=\linewidth]{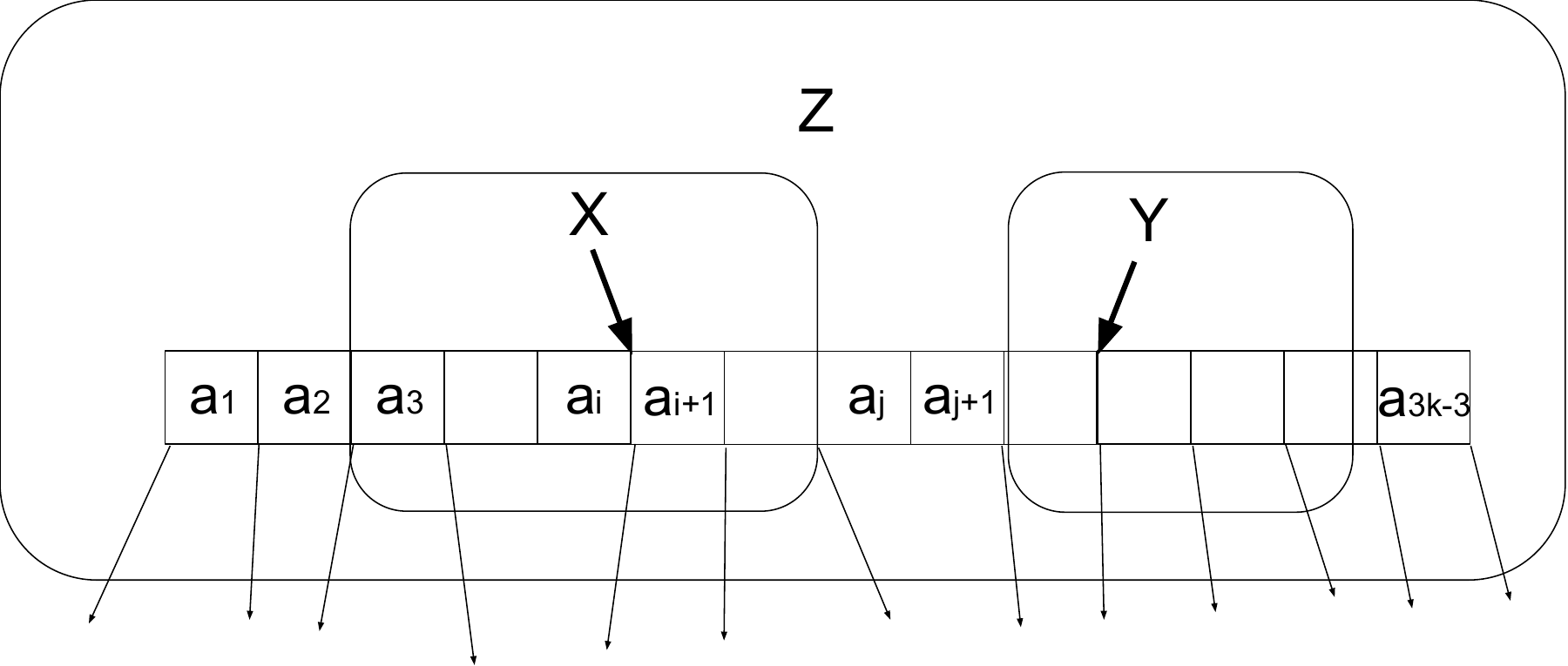}
    \caption{$k$-splay rotation. The first case.}
    \label{fig:splay-2}
    \end{minipage}
    \begin{minipage}{0.5\textwidth}
    \centering
    \includegraphics[width=\linewidth]{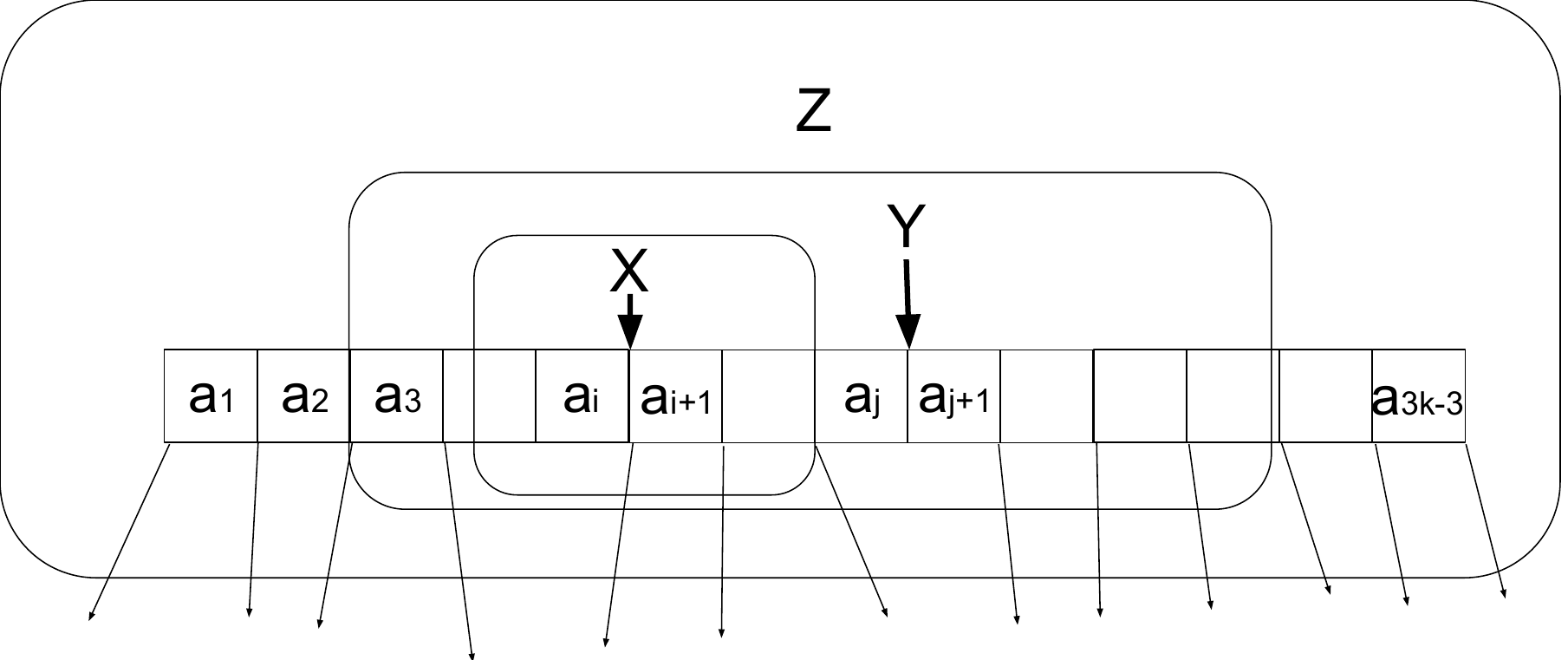}
    \captionof{figure}{$k$-splay rotation. The second case.}
    \label{fig:splay-3}
    \end{minipage}
\end{figure}

Using these rotations, we design a data structure called $k$-ary SplayNet which works similarly to binary SplayNet: upon serving a request between two keys, we use $k$-splay and $k$-semi-splay rotations to move the nodes to their lowest common ancestor, replacing it and one of its children.
By that, after adjustments, the request can be served in constant time.

Here, we designed only two types of rotations. These rotations can be seen as the generalization of the zig/zag and zigzag rotations in binary splay tree.
Now, we prove that a $k$-ary splay tree, i.e., all the routing requests are from the root, based on our rotations is statically-optimal. Please note that the lower bound to serve the search requests for $k$-ary splay tree is the same as for the standard splay tree~\cite{splaytree-paper}: suppose the lower bound is asymptotically better for $k$-ary splay tree, then we can represent $k$-ary tree as the splay tree with the constant multiplier $\log k$ on the total cost.

\begin{remark}
    The $k$-semi-splay and $k$-splay operations cannot be applied to routing-based trees. For example, after collapsing three nodes in $k$-splay (the second case, as in Figure~\ref{fig:splay-3}) we might get routing keys $Z$, $Y$, and $X$ (in this order) as the leftmost ones in the joint array of routing keys. At that point, we cannot distribute the routing keys in the required order, since these keys $X$, $Y$, and $Z$ should be node identifiers and routing keys at the same time and $X$ should be on top. When taking $X$ as a node identifier and a routing key of the topmost node, we will get only $Y$ and $Z$ keys in the leftmost subtree, which restricts us from filling in $X$ with routing keys.
\end{remark}

\begin{theorem}
The $k$-ary splay tree based on the $k$-semi-splay and $k$-splay rotations is statically-optimal. In other words, the total cost to serve $n_x$ search requests to node $x$ is $O(m + \sum_x n_x \cdot \log \frac{m}{n_x})$ where $m$ is the total number of requests.
\end{theorem}

\begin{proof}[Sketch]
The theorem can be proven in the almost identical way as the result for binary splay tree~\cite{splaytree-paper}. As in the splay tree proof, we present the potential of the node with key $v$ as $r(v) = \log w(v)$, where $w(v)$ is the total weight of all nodes in the subtree of $v$.
Then, we can reprove the Access Lemma from~\cite{splaytree-paper}, where the difference of the potential moving a node to the top does not exceed $3 \cdot (r(\texttt{root}) - r(v)) + 1$.
This lemma holds since all the inequalities on the potentials of nodes from the proof in~\cite{splaytree-paper} remain the same since $k$-semi-splay changes the potential exactly as zig does; our first case rotation of $k$-splay changes the potential exactly as zig-zag does, and, finally, our second case rotation of $k$-splay changes the potential exactly as like zig-zig does.
Thus, with the proved access lemma we can get the main theorem.
\end{proof}

Because of this theorem, the complexity bound proven for binary SplayNet in~\cite{splaynet-paper} holds for our $k$-ary SplayNet.

\begin{theorem}
$k$-ary SplayNet performs requests $\sigma = ((u_1, v_1), (u_2, v_2), \ldots, (u_m, v_m))$ with the cost of entropies of sources and destination: $O(\sum\limits_{x=1}^{n} a_x \cdot \log \frac{m}{a_x} + b_x \cdot \log \frac{m}{b_x})$, where $a_x$ is the number of requests with $x$ as a source and $b_x$ is the number of requests with $x$ as a destination.
\end{theorem}

Moreover, since each node has $k$ children instead of two, we expect that the total routing costs in $k$-ary SplayNet are smaller than in the original SplayNet. The experimental comparison between them appears in Section~\ref{sec:karysplaynet}.

We note that there can be more alternatives to $k$-semi-splay and $k$-splay. For example, we can take any $d$ connected nodes in the tree and modify them in a manner that the node with a chosen key will be in the topmost one after the update. This can be done as follows: 1)~merge all $d$ routing arrays into one; 2) find the positions of our $d$ identifiers in this array; 3)~choose some order of keys $k_1, k_2, \ldots, k_d$ in the nodes; 4) consider the $i$-th key $k_i$, take the $k-1$ consecutive routing keys ``covering'' $k_i$, and use them to form a new node with key $k_i$; 5) remove these routing elements from the total routing array and repeat the previous phase for next keys. At the end, the topmost node will contain the required key $k_d$.
Thus, we can have different versions of $k$-ary SplayNet depending on the rotations we choose.
In the remainder, we refer to $k$-ary SplayNet as any (black-box) implementation that maintains the search property and one identifier per node.

\subsection{Application of the centroid heuristic}

Our theoretical studies in Section~\ref{sec:quasiopt} show that the total cost for the uniform workload of both structures, the full $k$-ary tree and the centroid one, is very close. But we know from our experiments that for $n$ up to $10^3$ and $k$ up to 10 the centroid tree is actually optimal (cf. Remark~\ref{rem:expOptCentroid} in the Appendix).
With this practical motivation, we designed an online heuristic based on the centroid idea. We present $(k+1)$-SplayNet, which is a centroid-based structure and the online self-adjusting equivalent of the static tree from Section \ref{sec:quasiopt}.
The topology is presented in Figure~\ref{fig:k-splaynet}.
We split the nodes in $k+1$ almost equal parts and specify two centroid nodes: $c_1$ and $c_2$.
Centroid $c_2$ corresponds to the centroid from the previous section; its subtrees have $(n - 2) / (k + 1)$ nodes.

\vspace{0.2cm}
\noindent
\begin{minipage}{.4\textwidth}
\centering
\includegraphics[height=3.4cm]{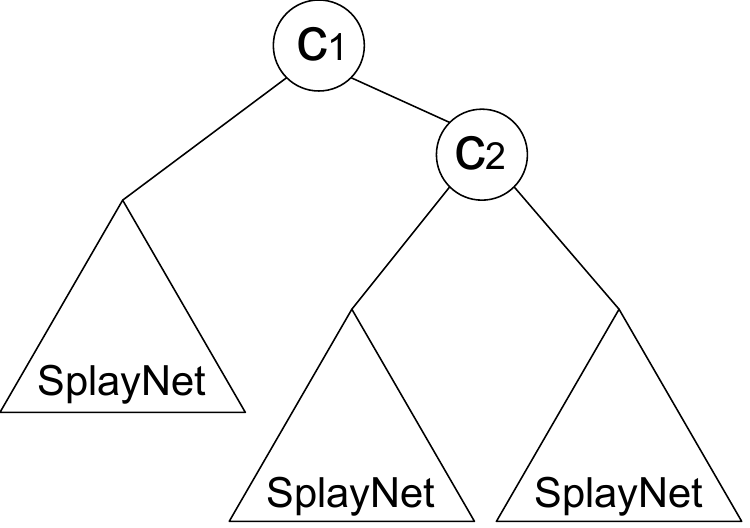}
\captionof{figure}{3-SplayNet structure}
\label{fig:3splaynet}
\end{minipage}%
\begin{minipage}{.6\textwidth}
\begin{center}
\includegraphics[height=3.4cm]{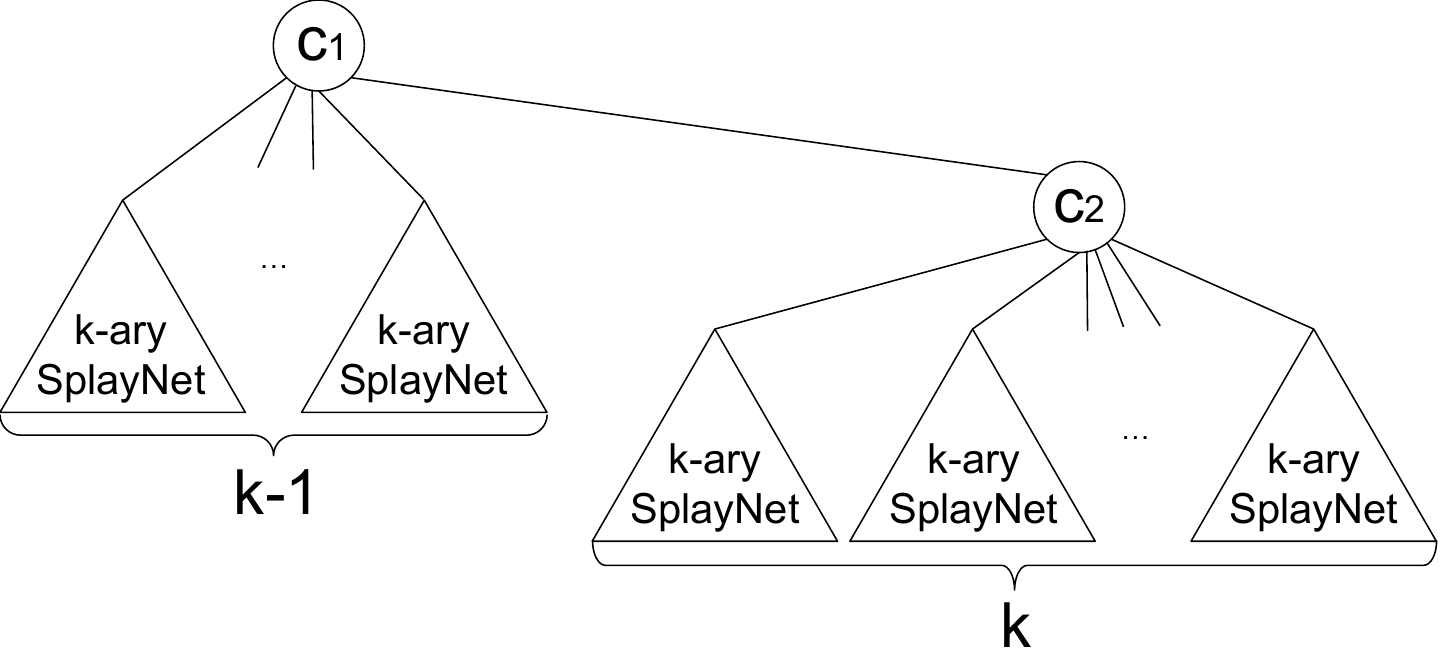}
\captionof{figure}{$(k+1)$-SplayNet structure.}
\label{fig:k-splaynet}
\end{center}
\end{minipage}
\vspace{0.2cm}

Centroid $c_1$ has $k-1$ children (in addition to $c_2$), that are $k$-ary SplayNets of size $[(n - 2) / (k + 1)] / (k - 1)$. 
The $k$ children of $c_2$ are $k$-ary SplayNets of size $(n-2) / (k + 1)$. When serving a request $(u, v)$, we $k$-splay $u$ and $v$ to their lowest common ancestor, as was done in $k$-ary SplayNet, but we never move nodes $c_1$ and $c_2$.
That is, requests within the same subtree are served exactly as in $k$-ary SplayNet and for requests originating in different subtrees of $c_1$ and $c_2$, we splay the endpoints to their subtree roots and then we route the request via the path $u\to c_1 \to c_2 \to v$.
The sets of nodes in the $2k-1$ subtrees remain intact, but these subtrees still can self-adjust.
We study $(k+1)$-SplayNet and $k$-ary SplayNet experimentally in the next section.

\section{Experimental evaluation}
\label{sec:experiments}
We have two presented approaches to evaluate: 1)~we compare the cost of our $k$-ary SplayNet with the static balanced $k$-ary search tree, the optimal static routing-based $k$-ary search tree (Section \ref{sec:dp}), and the standard SplayNet, i.e., $2$-ary SplayNet; 2)~we compare our new 3-SplayNet based on the centroid heuristic ($(k+1)$-SplayNet for $k=2$) with the standard SplayNet and the two static data structures from item 1), for $k=2$.
That is, in the first item we study the benefit of increasing $k$ across diverse workloads, while in the second item we compare the two heuristics of Section \ref{sec:online} and two static trees for $k=2$. In all our experiments, we set the routing and rotation costs to one.

\paragraph*{Setup and data} The code for the algorithms was written in C++ and Python. We perform three types of experiments: (i) on the uniform workload with $100$ nodes, (ii) on synthetic workloads with $1023$ nodes and the temporal complexity parameter (the probability of repeating the last request~\cite{avin2020complexity}) taking the values  $0.25$, $0.5$, $0.75$, and $0.9$, and (iii) on the data from three real-world datasets of datacenter network traces: a high performance computing (HPC) workload~\cite{doe2016characterization}, a workload on ProjectToR~\cite{projector-paper}, and a workload from Facebook's datacenter network traces \cite{roy2015inside}. 
We restrict all datasets to $10^6$ requests on: uniform workload with $100$ nodes, HPC with $500$ nodes, ProjectToR with $100$ nodes, and Facebook with $10^4$ nodes.

\vspace{-0.1cm}
\subsection{$k$-ary SplayNet}
\label{sec:karysplaynet}

In this section, we study how $k$-ary SplayNet performs on the chosen workloads. Since one could expect that the cost to physically reconfigure a node with $k$ neighbours ($k$-ary SplayNet) is higher than a node with just three neighbours (standard SplayNet)~--- our main goal is to show that the total routing cost decreases with the increase of $k$. For now, we assume that each of our rotations costs one. This assumption was also made in~\cite{feder2021toward}.

In Tables~\ref{table:hpc}-\ref{table:tc90}, we present how $k$-ary SplayNet works for $k \in [2, 10]$ against the static full and routing-based optimal $k$-ary trees. Note that we implicitly compare our data structures with the standard SplayNet which is the equivalent of our $2$-ary SplayNet. In the first row, we show the total routing cost for $2$-ary SplayNet and the relative cost of other $k$-ary SplayNets (the lower the better). In the second row, we show how our $k$-ary SplayNet works in comparison to the static full $k$-ary tree, i.e., the relative performance of our tree against the static one (the lower the number, the better our tree is).
In the third row, we show how our structure works in comparison to the optimal static routing-based $k$-ary tree (the lower the better).
We compare all workloads except for the uniform one, since we proved in Section~\ref{sec:quasiopt} that full $k$-ary tree is almost optimal. The green color means that our tree performs better, otherwise, worse.

We make two main observations from the tables. Firstly, as expected, the higher the $k$ the lower the total routing cost in $k$-ary SplayNet. Secondly, with the low temporal locality, as for HPC, Facebook, 0.25, and 0.5 workloads, the full $k$-ary tree typically performs better on higher $k$, while the optimal routing-based $k$-ary tree works better but no more than $3$ times. The latter means that our data structure is constant-away from optimality in practice. This behaviour can be simply explained; the higher $k$ the better the cost for $k$-ary full trees on uniform workloads. On high locality, i.e., 0.75 and 0.9, our $k$-ary SplayNet outperforms both static trees.
Also, we need to note that the algorithm building the optimal routing-based tree has high-complexity and we were not able to compute this tree for the Facebook workload.

\noindent
{
\resizebox{\linewidth}{!}{%
    \centering
    \begin{tabular}{|c|c|c|c|c|c|c|c|c|c|}
         \hline
         & 2 & 3 & 4 & 5 & 6 & 7 & 8 & 9 & 10 \\\hline
    SplayNet & 4798648 & 0.87x & 0.82x & 0.75x & 0.76x & 0.73x & 0.70x & 0.69x & 0.70x \\\hline
    Full Tree & \cellcolor{green!25}0.78x & \cellcolor{green!25}0.94x & \cellcolor{red!25}1.04x & \cellcolor{red!25}1.07x & \cellcolor{red!25}1.16x & \cellcolor{red!25}1.17x & \cellcolor{red!25}1.25x & \cellcolor{red!25}1.25x & \cellcolor{red!25}1.29x \\\hline
    Optimal Tree & \cellcolor{red!25}1.52x & \cellcolor{red!25}1.90x & \cellcolor{red!25}2.15x & \cellcolor{red!25}2.22x & \cellcolor{red!25}2.45x & \cellcolor{red!25}2.48x & \cellcolor{red!25}2.49x & \cellcolor{red!25}2.58x & \cellcolor{red!25}2.75x \\\hline
    \end{tabular}
}
    \captionof{table}{The comparison of $k$-ary SplayNet on HPC workload.}
    \label{table:hpc}
}

\noindent
{
\resizebox{\linewidth}{!}{%
    \centering
    \begin{tabular}{|c|c|c|c|c|c|c|c|c|c|}
         \hline
         & 2 & 3 & 4 & 5 & 6 & 7 & 8 & 9 & 10 \\\hline
         SplayNet & 3151626 & 0.93x & 0.91x & 0.87x & 0.84x & 0.86x & 0.86x & 0.84x & 0.83x \\\hline
    Full Tree & \cellcolor{green!25}0.40x & \cellcolor{green!25}0.49x & \cellcolor{green!25}0.46x & \cellcolor{green!25}0.52x & \cellcolor{green!25}0.70x & \cellcolor{green!25}0.50x & \cellcolor{green!25}0.58x & \cellcolor{green!25}0.57x & \cellcolor{green!25}0.92x \\\hline
    Optimal Tree & \cellcolor{red!25}1.45x & \cellcolor{red!25}1.81x & \cellcolor{red!25}2.09x & \cellcolor{red!25}2.10x & \cellcolor{red!25}2.08x & \cellcolor{red!25}2.20x & \cellcolor{red!25}2.22x & \cellcolor{red!25}2.22x & \cellcolor{red!25}2.25x \\\hline
    \end{tabular}
}
    \captionof{table}{The comparison of $k$-ary SplayNet on ProjectToR workload.}
    \label{table:projecttor}
}
\noindent
{
\resizebox{\linewidth}{!}{%
    \centering
    \begin{tabular}{|c|c|c|c|c|c|c|c|c|c|}
         \hline
         & 2 & 3 & 4 & 5 & 6 & 7 & 8 & 9 & 10 \\\hline
         SplayNet & 12320225 & 0.85x & 0.77x & 0.74x & 0.72x & 0.70x & 0.70x & 0.68x & 0.67x \\\hline
        Full Tree & \cellcolor{green!25}0.69x & \cellcolor{green!25}0.87x & \cellcolor{green!25}0.94x & \cellcolor{green!25}1.00x & \cellcolor{red!25}1.07x & \cellcolor{red!25}1.11x & \cellcolor{red!25}1.15x & \cellcolor{red!25}1.19x & \cellcolor{red!25}1.28x \\\hline
        Optimal Tree & \cellcolor{gray!25}- & \cellcolor{gray!25}- & \cellcolor{gray!25}- & \cellcolor{gray!25}- & \cellcolor{gray!25}- & \cellcolor{gray!25}- & \cellcolor{gray!25}- & \cellcolor{gray!25}- & \cellcolor{gray!25}- \\\hline
    \end{tabular}
}
    \captionof{table}{The comparison of $k$-ary SplayNet on Facebook workload.}
    \label{table:facebook}
}
\noindent
{
\resizebox{\linewidth}{!}{%
    \centering
    \begin{tabular}{|c|c|c|c|c|c|c|c|c|c|}
         \hline
         & 2 & 3 & 4 & 5 & 6 & 7 & 8 & 9 & 10 \\\hline
         SplayNet & 1389359 & 0.82x & 0.75x & 0.71x & 0.69x & 0.68x & 0.68x & 0.65x & 0.62x \\\hline
    Full Tree & \cellcolor{green!25}0.99x & \cellcolor{red!25}1.15x & \cellcolor{red!25}1.23x & \cellcolor{red!25}1.30x & \cellcolor{red!25}1.37x & \cellcolor{red!25}1.39x & \cellcolor{red!25}1.47x & \cellcolor{red!25}1.51x & \cellcolor{red!25}1.55x \\\hline
    Optimal Tree & \cellcolor{red!25}1.75x & \cellcolor{red!25}2.12x & \cellcolor{red!25}2.32x & \cellcolor{red!25}2.49x & \cellcolor{red!25}2.64x & \cellcolor{red!25}2.71x & \cellcolor{red!25}2.88x & \cellcolor{red!25}2.99x & \cellcolor{red!25}3.03x \\\hline
    \end{tabular}
}
    \captionof{table}{The comparison of $k$-ary SplayNet on synthetic workload with temporal complexity parameter $0.25$.}
    \label{table:tc25}
}
\noindent
{
\resizebox{\linewidth}{!}{%
    \centering
    \begin{tabular}{|c|c|c|c|c|c|c|c|c|c|}
         \hline
         & 2 & 3 & 4 & 5 & 6 & 7 & 8 & 9 & 10 \\\hline
         SplayNet & 963150 & 0.83x & 0.76x & 0.72x & 0.70x & 0.69x & 0.69x & 0.67x & 0.64x \\\hline
    Full Tree & \cellcolor{green!25}0.69x & \cellcolor{green!25}0.80x & \cellcolor{green!25}0.86x & \cellcolor{green!25}0.91x & \cellcolor{green!25}0.97x & \cellcolor{green!25}0.98x & \cellcolor{red!25}1.03x & \cellcolor{red!25}1.06x & \cellcolor{red!25}1.10x \\\hline
    Optimal Tree & \cellcolor{red!25}1.21x & \cellcolor{red!25}1.49x & \cellcolor{red!25}1.64x & \cellcolor{red!25}1.76x & \cellcolor{red!25}1.87x & \cellcolor{red!25}1.91x & \cellcolor{red!25}2.04x & \cellcolor{red!25}2.12x & \cellcolor{red!25}2.15x \\\hline
    \end{tabular}
}
    \captionof{table}{The comparison of $k$-ary SplayNet on synthetic workload with the temporal complexity parameter $0.5$.}
    \label{table:tc50}
}
\noindent
{
\resizebox{\linewidth}{!}{%
    \centering
    \begin{tabular}{|c|c|c|c|c|c|c|c|c|c|}
         \hline
         & 2 & 3 & 4 & 5 & 6 & 7 & 8 & 9 & 10 \\\hline
         SplayNet & 530049 & 0.85x & 0.78x & 0.75x & 0.73x & 0.72x & 0.72x & 0.70x & 0.67x \\\hline
    Full Tree & \cellcolor{green!25}0.38x & \cellcolor{green!25}0.45x & \cellcolor{green!25}0.49x & \cellcolor{green!25}0.52x & \cellcolor{green!25}0.55x & \cellcolor{green!25}0.56x & \cellcolor{green!25}0.59x & \cellcolor{green!25}0.61x & \cellcolor{green!25}0.64x \\\hline
    Optimal Tree & \cellcolor{green!25}0.68x & \cellcolor{green!25}0.84x & \cellcolor{green!25}0.94x & \cellcolor{red!25}1.02x & \cellcolor{red!25}1.09x & \cellcolor{red!25}1.12x & \cellcolor{red!25}1.19x & \cellcolor{red!25}1.24x & \cellcolor{red!25}1.26x \\\hline
    \end{tabular}
}
    \captionof{table}{The comparison of $k$-ary SplayNet on synthetic workload with the temporal complexity parameter $0.75$.}
    \label{table:tc75}
}
\noindent
{
\resizebox{\linewidth}{!}{%
    \centering
    \begin{tabular}{|c|c|c|c|c|c|c|c|c|c|}
         \hline
         & 2 & 3 & 4 & 5 & 6 & 7 & 8 & 9 & 10 \\\hline
         SplayNet & 271838 & 0.88x & 0.83x & 0.80x & 0.79x & 0.78x & 0.78x & 0.76x & 0.74x \\\hline
    Full Tree & \cellcolor{green!25}0.20x & \cellcolor{green!25}0.24x & \cellcolor{green!25}0.27x & \cellcolor{green!25}0.29x & \cellcolor{green!25}0.31x & \cellcolor{green!25}0.31x & \cellcolor{green!25}0.33x & \cellcolor{green!25}0.34x & \cellcolor{green!25}0.36x \\\hline
    Optimal Tree & \cellcolor{green!25}0.36x & \cellcolor{green!25}0.46x & \cellcolor{green!25}0.53x & \cellcolor{green!25}0.58x & \cellcolor{green!25}0.62x & \cellcolor{green!25}0.64x & \cellcolor{green!25}0.68x & \cellcolor{green!25}0.72x & \cellcolor{green!25}0.73x \\\hline
    \end{tabular}
}
    \captionof{table}{The comparison of $k$-ary SplayNet on synthetic workload with the temporal complexity parameter $0.9$.}
    \label{table:tc90}
}

\subsection{A case study of the centroid heuristic for $k=2$}
\label{sec:3splaynet}

%

In this subsection, we study the online centroid heuristic experimentally for the case of $k=2$. We compare our centroid-based 3-SplayNet (Figure~\ref{fig:3splaynet}) with the $2$-ary SplayNet, which is the standard SplayNet.
We implemented and compared them on different workloads that simulate real request patterns. As a result, it appears that on workloads with low temporal complexity $3$-SplayNet works better than SplayNet.

We run the workloads described above on four different structures: 3-SplayNet, SplayNet, static full binary search tree, and static optimal binary search tree.
In Table~\ref{table:tc90}, one can see the average request cost by $3$-SplayNet and the relative difference with other approaches.
The green cell means that our $3$-SplayNet is better and red means otherwise.

We observe that 3-SplayNet performs better or similarly to SplayNet on average and low temporal complexity workloads (0.25 and 0.5), while on high temporal complexity workloads (0.75 and 0.9) it works a bit worse. 
Also, 3-SplayNet outperforms SplayNet for the uniform, ProjecToR and Facebook workloads, but not for the HPC workload (higher locality than the other two real-world workloads).
We interpret this as the effect of having fixed centroid nodes.

\vspace{0.2cm}
\noindent
{
    \centering
    \begin{tabular}{|c|c|c|c|c|}
         \hline
         & 3-SplayNet & SplayNet & Full Binary Net & Static Optimal Net\\\hline
	Uniform & 17.730 & \cellcolor{green!25} x1.059 & \cellcolor{red!25} x0.789 & \cellcolor{red!25} x0.759\\\hline
	HPC & 9.269 & \cellcolor{red!25} x0.956 & \cellcolor{green!25} x1.206 & \cellcolor{green!25} x1.034\\\hline
	ProjecToR & 2.865 & \cellcolor{green!25} x1.132 & \cellcolor{green!25} x3.040 & \cellcolor{red!25} x0.800\\\hline
	Facebook & 8.210 & \cellcolor{green!25} x1.104 & \cellcolor{red!25} x0.939 & \cellcolor{red!25} x0.852\\\hline
	Temporal 0.25 & 13.332 & \cellcolor{green!25} x1.046 & \cellcolor{green!25} x1.046 & \cellcolor{red!25} x0.937\\\hline
	Temporal 0.5 & 9.414 & \cellcolor{green!25} x1.021 & \cellcolor{green!25} x1.482 & \cellcolor{green!25} x1.326\\\hline
	Temporal 0.75 & 5.520 & \cellcolor{red!25} x0.963 & \cellcolor{green!25} x2.527 & \cellcolor{green!25} x2.250\\\hline
	Temporal 0.9 & 3.186 & \cellcolor{red!25} x0.856 & \cellcolor{green!25} x4.380 & \cellcolor{green!25} x3.862\\\hline
    \end{tabular}
    \captionof{table}{The comparison of $3$-SplayNet with other known nets. Green means that $3$-SplayNet is better.}
    \label{table:tc90}
}

\section{Conclusion and future work}
\label{sec:conclusion}

We presented online and offline algorithms for $k$-ary search tree networks. Specifically, we presented dynamic programming algorithms for computing an optimal static network for generic and uniformly distributed traffic. Then, we presented an online $k$-ary search tree network and a variant that utilizes our centroid structure. We proposed novel splay operations that are applicable in the context of self-adjusting networks. Our experimental results show: 1)~the total routing cost for $k$-ary SplayNet is smaller than for standard SplayNet and is smaller than for full $k$-ary trees on traces with high locality; 2) on binary trees for real and synthetic traces of medium to low locality our centroid network outperforms SplayNet and its performance is always close to the best out of the algorithms tested.
We believe that our work paves the way to new SANs for $k$-ary search tree networks for general and specific traffic patterns.


\newpage
\bibliography{references}
\newpage

\appendix

\section{Dynamic programming algorithms for the optimal tree}
\label{app:dp}

\subsection{Algorithm for arbitrary traffic patterns}
\label{app:genericworkload}
In our first result, we construct an offline (optimal) static routing-based $k$-ary search tree network. We are given the number of nodes $n$ and a demand matrix $D \in \mathbb{N}_{0}^{n \times n}$ that includes the total number of requests between each pair of nodes $u$ and $v$. We have to find a $k$-ary search tree $T$ on $n$ vertices that minimizes the total distance:
	\begin{align*}
		\mathrm{TotalDistance}(D, T) = \sum\limits_{(u, v) \in [n]\times[n]}d_{T}(u, v)\cdot D[u, v],
	\end{align*} 
	where $d_T(u, v)$ is the distance between nodes $u$, $v$ in tree $T$, and $[n] = \{1,\ldots, n\}$.
%
It is convenient to think about this problem in terms of edge potentials.
\begin{definition}
    \label{def:edge_potential}
	Given a demand matrix $D \in \mathbb{N}_0^{n\times n}$ and a tree $T$ on $n$ vertices the potential of an edge $e \in E(T)$ is
		$\mathrm{potential}(D, T, e) = \sum\limits_{(u, v) \in \mathrm{passThrough}(T, e)} D[u, v]$, 
	where $\mathrm{passThrough}(T, e)$ is a set of pairs $(u, v) \in [n]\times[n]$ such that the shortest path connecting $u$ and $v$ in $T$ passes through $e$. Note that the shortest path is unique since $T$ is a tree.
\end{definition}

By their definitions, the total distance can be expressed using the potentials as follows:  $\mathrm{TotalDistance}(D, T) = \sum\limits_{e \in E(T)}\mathrm{potential}(D, T, e).$
This holds because each routing request adds 1 to the potential of each edge of the routing path connecting the communication endpoints.
Now, we present the algorithm.

\begin{theorem}
\label{thm:arbitrary workload}
An offline static routing-based $k$-ary Search Tree network, i.e., one with the minimal total distance given the requests in advance, can be constructed in $O(n^3k)$.
\end{theorem}
\begin{proof}
Our algorithm uses dynamic programming.
Throughout the proof, when we refer to the segment $[i, j]$, we assume that $i \leq j$. 
For a segment $[i, j]$, we denote the submatrix $D[i\ldots j, i\ldots j]$ by $D|_{[i, j]}$.
We define the matrix  $W\in\mathbb{N}_0^{n \times n}$, where
$W[i, j] = \sum\limits_{u \in [n] \setminus [i,j]}\sum\limits_{v \in [i, j]}D[u, v] + D[v, u]$.
Intuitively, the value $W[i, j]$ is the total number of requests going in or out of segment $[i, j]$.
	
\begin{claim}
There is an algorithm that computes $W$ in $O(n^3)$ time.
\end{claim}
\begin{proof}
	We express $W[i, j]$ in terms of forward and backward functions. Namely for each pair of nodes $(u, v),\ u < v$ we define 
		$F[u, v] = \sum\limits_{w = v}^n D[u, w] + D[w, u].$
	That is, it calculates the number of requests between $u$ and $[v, n]$.
	Analogously, we define for each pair of nodes $(u, v),\ v < u$, 
		$B[u, v] = \sum\limits_{w = 1}^v D[u, w] + D[w, u]$.
	The latter calculates the number of requests between $u$ and $[1, v]$.
	The whole prefix function $F[u, \cdot]$ can be computed in $O(n)$. First, we compute $F[u, u + 1]$ by its definition in $O(n)$. Then, we move the left point $v$ of the suffix by one: $F[u, v]$ for $v > u + 1$ is computed as $F[u, v - 1] - D[u, v] - D[v, u]$. Giving us $O(n)$ in total.
	Symmetrically, $B[u, \cdot]$ can be computed in $O(n)$.
	Thus, all matrices $F$ and $B$ can be precomputed in $O(n^2)$, i.e., we iterate over all $u$. 
    Now, we can compute $W[i, j]$ in $O(n)$ using prefix functions $F$ and $B$ in the following manner: $W[i, j] = \sum\limits_{u \in [i, j]} F[u, j + 1] + B[u, i - 1]$, i.e., for each node $u$ in the segment we calculate the number of requests to the left out of the segment and the number of requests to the right out of the segment. Thus, we computed $W$ in $O(n^3)$.
\end{proof}


We define a target $\mathrm{cost}$ of a segment $[i, j]$ as the cost of the optimal routing-based $k$-ary Search Tree built on that segment plus the number of requests going out of that segment calculated in $W$: $\mathrm{cost}(i, j) = \min\limits_{T}\mathrm{TotalDistance}(D|_{[i, j]}, T) + W[i, j]$.
Now, we can define our dynamic programming $dp$ for $1 \leq i \leq j \leq n$ and $1 \leq t \leq k$ as
		$dp[i][j][t] = \min\limits_{i = i_1 < i_2 < \ldots < i_{t + 1} = j + 1}\sum\limits_{p = 1}^t \mathrm{cost}(i_{p}, i_{p + 1} - 1)$.
	
Intuitively, $dp[i][j][t]$ computes the minimal cost of partitioning a segment $[i, j]$ into $t$ children that are $k$-ary search trees. We can compute $dp$ by using the following equalities:
\begin{align*}
        dp[i][j][1] =& \min\limits_{r \in [i, j]}\min\limits_{d_l + d_r \leq k} (dp[i][r - 1][d_l] + dp[r + 1][j][d_r]\\
        & \hspace{2cm}+ W[i, j])\\
        dp[i][j][t] =& \min\limits_{l \in [i, j - 1]} (dp[i][l][1] + dp[l + 1][j][t - 1]),\ t > 1
\end{align*}

The logic is that in order to partition the segment into $t > 1$ trees one should first choose the prefix subsegment for the first tree and build $t - 1$ trees on the remaining segment. 
    
The case $t = 1$ is special. It means that we want to build a single search tree on this segment. In order to do so, we first choose a key $r \in [i,j]$ for the root node \textbf{which also belongs to the routing array} and after that the number of children $d_l$ to the left of a root node $[i, r - 1]$ and a number of children $d_r$ to the right of a root node $[r + 1, j]$. This covers all the possible cases. In each case, we can optimize each subtree out of $d_l + d_r$ subtrees independently, which equals the corresponding $dp$ value. Finally, we add the number of requests that pass an edge from $r$ to the parent, which is $W[i, j]$, i.e., the potential of that edge. Note that the requests to $r$ from the subtrees are already calculated in the corresponding $dp$ using $W$.

When calculating $dp[i][j][t]$ we refer to the answer on subsegments of $[i, j]$, so we make sure that this value is already calculated by processing segments in increasing length. Namely, we start by setting the answer on the segments of length $1$, and, then, proceed by considering all the segments of length $2$, then $3$, and so on, up to $n$. 

We need to consider all possible segments and there are $O(n^2)$ of them. For each segment we calculate $dp[i][j][t]$ for $t \in [1, k]$. When $i, j$ and $t > 1$ are fixed, we spend $O(n)$ time iterating over $[i, j]$ looking for a minimum. This results in $O(n^2 \cdot (k - 1) \cdot n)$ in total. And when $i, j$ and $t = 1$ are fixed, we spend $O(n)$ considering different roots and $O(k^2)$ possibilities of distributing subtrees to the left and to the right of the root. So, in this case we obtain $O(n^3 k^2)$.  

It is possible to reduce the complexity by $k$. For that we introduce $dp_2[i][j][x] = \min\limits_{y \leq x} dp[i][j][y]$. If we can calculate that, then we do not have to iterate over all pairs $d_l$ and $d_r$. Now, for the $t = 1$ case, we need to find $\min\limits_{d_l+d_r=k} dp_2[i][r - 1][d_l] + dp_2[r + 1][j][d_r]$. This gives $O(n^3 k)$ in total. Thus, $dp_2$ can be computed in the desired time.
\end{proof}

\begin{remark}
  We remark that computing an optimal static non-routing-based $k$-ary search tree network is an open problem. That is, there is neither a trivial dynamic programming extension for it nor an NP-hardness proof.
\end{remark}

\subsection{Algorithm for uniformly distributed traffic}
\label{app:uniformopt}
In this section, we improve the cubic complexity proven in Section \ref{app:genericworkload} for the special case of a uniform workload. This case is relevant to all-to-all traffic patterns.
The uniform workload is an infinite workload where each pair of nodes is requested uniformly at random.
Our goal is to find a static $k$-ary search tree that serves an infinite uniform workload as fast as possible, i.e., the expectation of the cost of each query is minimal.
To simplify the analysis we consider a finite version of this workload. Note that a finite and an infinite uniform workload are the same in terms of expected values of query costs.
A finite uniform workload is a workload where each pair of nodes is requested exactly once, thus we are interested in minimizing $\mathrm{TotalDistance}(D_{uniform}, T) = \sum\limits_{(u, v) \in [n]\times[n]}d_{T}(u, v)$, where $D_{uniform}$ is an upper triangular matrix in which all elements in the diagonal and below are $0$ and all remaining elements are $1$.

Note that we are interested in constructing an optimal network which is not necessarily a full static $k$-ary tree for the uniform workload case.
We show how to update the dynamic program of the previous section so that it only requires $O(n^2 \cdot k)$ time.
We remark that the resulting tree is not required to be \emph{routing-based}.

\begin{lemma}
    \label{clm:w uniform}
    In the uniform workload scenario, $W[i, i - 1 + l] = l \cdot (n - l)$ for any $l \in [1, n]$ and any $i \in [1, n - l]$. That is, the values of $W$ for a segment depend only on its length, not its position. 
    %
\end{lemma}
\begin{proof}
    Recall that intuitively $W[i, j]$ indicates the number of requests going out of the segment $[i, j]$. Since each node within the segment communicates exactly once with each node outside the segment, then $W[i, i + l - 1] = l \cdot (n - l)$.
\end{proof}

\begin{lemma}
    \label{clm:cost uniform}
    In the uniform workload scenario, $\mathrm{cost}(i, i  - 1 + l) = \mathrm{cost}(j, j - 1 + l)$ for any $l \in [1, n]$ and any $i,j \in [1, n - l]$. That is, the cost of the segment depends only on its length, not its position. 
\end{lemma}
\begin{proof}
    Recall that $\mathrm{cost}(i, j) = \min\limits_{T}\mathrm{TotalDistance}(D|_{[i, j]}, T)$ $+ W[i, j]$.
    By Lemma \ref{clm:w uniform} the second term is equal for any two segments of an equal length. As for the first term, it is also equal since $D|_{[i, i - 1 + l]} = D|_{[j, j - 1 + l]}$ for all $i, j, l$ in the uniform case. 
\end{proof}

By Lemma \ref{clm:cost uniform}, we can simplify our $dp$ from three parameters $dp[i][j][t]$ to two $dp[l][t]$, where $l$ now signifies the length of the segment. Thus, we can reduce our dynamic program by one dimension and get rid of an $n$ factor, resulting in $O(n^2k)$. 

\section{Centroid static k-ary search tree network in O(n). Full version.}
\label{app:quasiopt}

A $k$-ary search tree can be split on levels: the $i$-th level consists of nodes that are at distance $i + 1$ from the root.
The tree is \emph{weakly-complete} when all its levels, except for the last one, are fully filled (i.e., the $i$-th level has $k^{i-1}$ nodes).
Nodes on the last level can be distributed arbitrarily.
%
The \emph{height} of a tree is the length of the path in edges from the root to the nodes on the last non-empty level.
%
Moreover, in a finite uniform workload, a potential from Definition~\ref{def:edge_potential} of an edge connecting two subtrees $S$ and $T$ is equal to $|V(S)| \cdot |V(T)|$. 

Our goal is to find an optimal non-rooted $(k+1)$-degree tree instead of a $k$-ary search tree. A $(k+1)$-degree tree is a non-rooted tree where each node has at most $k+1$ neighbours. Such trees represent the same set of trees as $k$-ary search trees: you can root a $(k+1)$-degree tree by a leaf and obtain a $k$-ary search tree. Also, we can ignore the search property for now, since we consider just the uniform workload; given a rooted tree we can always place the right labels. The main intuition of our improvement is that when someone talks about $k$-ary tree they consider the root to have $k$ children, while, in general, it can have $k+1$ children.

\begin{definition}
A \emph{centroid} $(k+1)$-degree tree is a tree with the root having $k + 1$ weakly-complete $k$-ary trees. All the levels of the whole tree are fully filled except possibly the last one. We can change the relative positions of subtrees such that the leaves on the last level are all grouped together to the left. The tree is shown on Figure~\ref{fig:quasi-optimal tree}.
\end{definition}

\begin{figure}[H]
    \centering
    \includegraphics[scale=0.3]{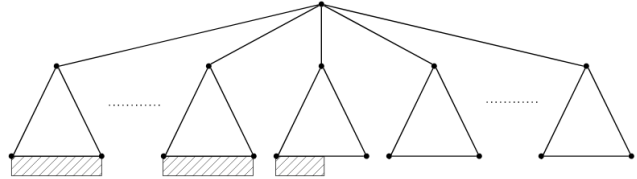}
    \caption{A centroid tree after the reposition of subtrees. Hatched rectangles represent leaves.}
    \label{fig:quasi-optimal tree}
\end{figure}




Now, we prove that the centroid $(k+1)$-degree tree has total distance close to the total distance of the optimal $(k+1)$-degree tree.

\begin{definition}
Consider two neighbouring weakly-complete subtrees. A \emph{push-up} operation moves a leaf from the last level of one tree to the last level of another. 
\end{definition}

\begin{lemma}
	\label{lem:push_up}
	Assume that we do a push-up operation in the tree $G$ from the weakly-complete subtree $T$ of height $h_2$ to a weakly-complete sibling subtree $S$ of height $h_1$ ($h_2 > h_1$, note that the tree is not necessarily weakly-complete, i.e., $h_2$ can be bigger than $h_1 + 1$).
Assuming $|V(T)| + |V(S)| \leq \frac{n}{9k}$, the total distance for uniform workload decreases.
\end{lemma}
\begin{proof}
	Let a leaf $u$ of $T$ be the removed node and let a leaf $v$ of $S$ is where we place the moved node. The total distance is affected: all the terms with $u$, i.e., $d_G(u, x)$ terms, are removed and $n - 1$ new terms $d_G(v, x)$ with $v$ are added, where $G$ is the whole tree.
	Let us denote the path from the root of $T$ to $u$: $\mathrm{root} = t_1, t_2, \ldots, t_{h_2 + 1} = u$. Let us also denote the subtree formed by $t_i$ and its child trees other than the one with $u$ as $T_i$.
	The same notation we use for the tree $S$ and the node $v$. You can see the tree at Figure~\ref{fig:pushup}.
	We define $R := G \setminus (S \cup T)$.
	\begin{figure}[H]
		\centering
		\includegraphics[scale=0.3]{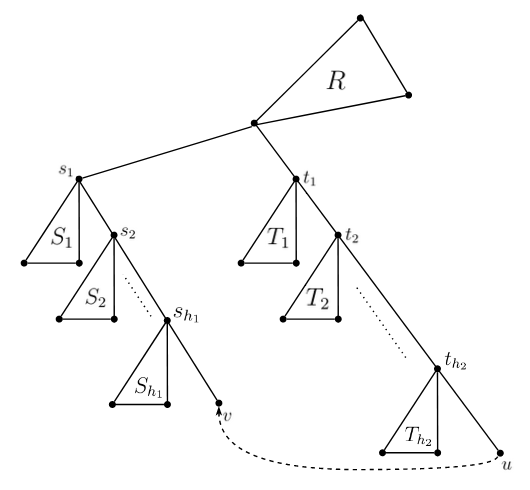}
		\caption{Push up Lemma notation}
		\label{fig:pushup}
	\end{figure}
		Our goal now is to calculate the difference in the total distance after the push up operation. This difference consists of three parts:
	\begin{enumerate}
		\item 
		We denote the change in the distance for nodes in $S$ ($v$ is closer to them than $u$):
		\begin{align*}
			\Delta_1 := \sum\limits_{w \in V(S)}d_{G}(u, w) - \sum\limits_{w \in V(S)}d_{G}(v, w).
		\end{align*} 
		\item 
		We denote the change in the distance for nodes in $T$ ($u$ is closer to some nodes than $v$, note that not all nodes in $T$ are closer to $u$ since $v$ has smaller depth):
		\begin{align*}
			\Delta_2 := \sum\limits_{w \in V(T)}d_{G}(v, w) - \sum\limits_{w \in V(T)}d_{G}(u, w).
		\end{align*} 
		\item 
		We denote the change in the distance for nodes in $R$ ($v$ is closer to them than $u$ since the depth of $v$ is smaller):
		\begin{align*}
			\Delta_3 := \sum\limits_{w \in V(R)}d_{G}(u, w) - \sum\limits_{w \in V(R)}d_{G}(v, w).
		\end{align*} 
	\end{enumerate}
	
	Hence, the total distance is changed by $\Delta := \Delta_2 - \Delta_1 - \Delta_3$. We want to prove that $\Delta$ is negative, thus, the total distance decreases when we move $u$ to $v$.
	
	We can lower bound $\Delta_3$ easily: 
	$
		\Delta_3 = (h_2 - h_1) \cdot |V(R)| 
		\geq (h_2 - h_1)\cdot (9k - 1)(|V(T)| + |V(S)|) > \\ > 8k|V(T)|
	$
	
	Let us now find an upper bound for $\Delta_2$.
	At first, we define the Total Distance to Root (or TDR) function for tree $W$ rooted at $r$ as $TDR(W) = \sum\limits_{v \in V(W)} d_{W}(v, r)$.
    We say that $T_i$ is rooted at $t_i$.

    So, the total distance from $u$ and $v$ to the nodes of $T$ can be expressed in terms of this function TDR:
	\begin{align*}
		\sum\limits_{w \in V(T)}d_{G}(v, w) = \sum\limits_{i = 1}^{h_2}\left(|V(T_i)| \cdot (h_1 + i + 1) + TDR(T_i)\right)\\
		\sum\limits_{w \in V(T)}d_{G}(u, w) = \sum\limits_{i = 1}^{h_2}\left(|V(T_i)| \cdot (h_2 - i + 1) + TDR(T_i)\right)
	\end{align*}
	
	The intuition behind those formulas is that in order to travel from $v$ to all the nodes in $T_i$ we first need to travel to its root $t_i$ which is at the distance $(h_1 + i + 1)$. We do it for each node, so $|V(T_i)|$ times. And, then, we travel from the root of $T_i$ to a corresponding node, accumulating $TDR(T_i)$ in total. The same is calculated for $u$ but the distance to $t_i$ decreases. 
	
	Thus,
\begin{footnotesize}
	\begin{align*}
		\Delta_2 &= \sum\limits_{i = 1}^{h_2}|V(T_i)|(2i + h_1 - h_2)\\
		&=-(2(h_2 - 1) + h_1 - h_2) +  \sum\limits_{i = 1}^{h_2 + 1}|V(T_i)|(2i + h_1 - h_2)\\
		&= -(h_2 + h_1 - 2) + (h_1- h_2)\sum\limits_{i=1}^{h_2 + 1}|V(T_i)| + 2\sum\limits_{i =1}^{h_2 + 1}i\cdot |V(T_i)|\\
		&= -(h_2 + h_1 - 2) + |V(T)|(h_1 - h_2) + 2\sum\limits_{i =1}^{h_2 + 1}i\cdot |V(T_i)|\\
		& \leq 1 - |V(T)| + 2\sum\limits_{i =1}^{h_2 + 1}i\cdot |V(T_i)| \leq 2\sum\limits_{i =1}^{h_2 + 1}i\cdot |V(T_i)|
	\end{align*}
\end{footnotesize}

	Each $T_i$ consists of root $t_i$ and not more than $k - 1$ weakly-complete $k$-ary trees of height not exceeding $h_2 - i$. Therefore,
	
	\begin{align*}
		|V(T_i)| &\leq 1 + (k - 1) \cdot (1 + k + k^2 + \ldots + k^{h_2 - i})\\
		&= 1 + (k - 1) \cdot \frac{k^{h_2 - i + 1} - 1}{k - 1}\\
		&= k^{h_2 - i + 1}
	\end{align*}

    By that, we notice that 
    \begin{footnotesize}
	\begin{align*}
	    k|V(T)| \geq k(1 + k + \ldots + k^{h_2 - 1}) = \frac{k^{h_2 + 1} - 1}{k - 1} - 1 > \frac{k^{h_2 + 1} - 1}{k - 1},
	\end{align*}
    \end{footnotesize}

    Using this inequality, we can upper bound $\Delta_2$ further.

	\begin{claim}
		\label{clm:push_up}
		$\Delta_2 \leq 8k|V(T)|$
	\end{claim}
	\begin{proof}
		$
			\Delta_2 \leq 2k^{h_2 + 1}\sum\limits_{i =1}^{h_2 + 1}\frac{i}{k^i}
			\leq  2k^{h_2 + 1}\frac{k}{(k-1)^2}
	 		= \\ = 2\frac{k^{h_2 + 1} - 1}{k - 1} \cdot \frac{k}{k - 1} + \frac{2k}{(k-1)^2}
	 		\leq \frac{2k}{k - 1}k|V(T)| + 4
	 		\leq 8k|V(T)|
		$
	\end{proof}
	
	Recall that $\Delta_3 > 8k|V(T)|$ and $\Delta_1 \geq 0$. So we obtain that $\Delta < 0$.
\end{proof}

\begin{corollary}
	\label{cor:push_up1}
	For each subtree $S$ of an optimal tree $T$ if $|V(S)| \leq \frac{|V(T)|}{9k}$ then $S$ is a weakly-complete tree.
\end{corollary}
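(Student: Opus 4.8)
The plan is to argue by contradiction from the optimality of $T$, using Lemma~\ref{lem:push_up} as the main tool. Since $T$ is the whole optimal tree, $|V(T)| = n$, so the hypothesis reads $|V(S)| \le \frac{n}{9k}$. Fix such an $S$ and assume it is not weakly-complete; the goal is to produce a $k$-ary search tree on the same keys with strictly smaller total distance for the uniform workload, contradicting optimality.

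First I would localize the failure of weak-completeness. Starting at the root of $S$ and repeatedly descending into a child whose subtree is itself not weakly-complete, one reaches a subtree $S^\star \subseteq S$ that is not weakly-complete but all of whose child-subtrees are (the descent terminates because single vertices are weakly-complete). Let $H$ be the maximum height among the children of the root of $S^\star$. Since $S^\star$ is assembled from weakly-complete siblings yet is not weakly-complete, one of a few structural defects must occur among those children: the root has fewer than $k$ children, or some child has height at most $H-2$, or some child of height $H-1$ is not a complete tree. In each case I would pin down a neighbouring pair $(A,B)$ of child-subtrees of the root of $S^\star$, with $A$ taller than $B$, such that a push-up moving a boundary leaf from $A$ to $B$ (which keeps the search property) leaves $B$ of height still strictly below the original height of $A$.

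Given such a pair, I apply Lemma~\ref{lem:push_up}: $A$ and $B$ are weakly-complete siblings, and $V(A)\cup V(B) \subseteq V(S^\star) \subseteq V(S)$, hence $|V(A)| + |V(B)| \le |V(S)| \le \frac{n}{9k}$, which is exactly the size bound the lemma requires. The lemma then says this push-up strictly decreases the total distance for the uniform workload — the desired contradiction. Therefore $S$ must be weakly-complete, which is the claim.

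The step I expect to be the real obstacle is the structural claim that a non-weakly-complete arrangement of weakly-complete siblings always admits a \emph{strictly} improving push-up. The delicate situation is when every neighbouring pair of siblings differs in height by at most one and the missing nodes on some level are caused by a short but \emph{complete} subtree: then a single push-up into that subtree raises its height to match the donor's, so Lemma~\ref{lem:push_up}, which needs a strict drop in height, does not apply directly. Handling this will require either a more careful choice of the donor/receiver pair (descending further, or first performing one preparatory push-up to open up a height gap and then arguing that the two moves together lower the total distance), or a direct convexity-style comparison showing that the most balanced rearrangement of these at most $\frac{n}{9k}$ vertices has strictly smaller total distance. The remaining parts — the descent to $S^\star$, tracking which vertices are involved, and checking the size bound — are routine.
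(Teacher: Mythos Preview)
Your descent to a minimal non-weakly-complete subtree $S^\star$ is exactly the paper's induction on the height of $S$, just phrased contrapositively; in both versions the crux is the same structural claim that a non-weakly-complete $S^\star$ whose children are all weakly-complete admits a push-up satisfying the height inequality of Lemma~\ref{lem:push_up}. The paper simply asserts this step without justification, so your write-up is already more careful than the original.

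The situation you flag as delicate is in fact vacuous, so no preparatory push-up or convexity argument is needed. Suppose the root of $S^\star$ has all $k$ children, every pair of sibling subtrees differs in height by at most one, and every child of height $H-1$ (where $H$ is the maximal child height) is complete. Then level~$1$ of $S^\star$ is full; for $2\le j\le H$, level~$j$ of $S^\star$ is level $j-1$ of each child, which is full both for the weakly-complete children of height $H$ and for the complete children of height $H-1$; and level $H+1$ is the last level. Hence $S^\star$ is weakly-complete, contradicting its choice. Consequently, whenever $S^\star$ is genuinely not weakly-complete, one of your three listed defects must occur, and each of them yields a donor $A$ and receiver $B$ with post-move height of $B$ strictly below the pre-move height of $A$: an empty slot gives $h_1=0<h_2$; a gap of at least two in heights gives $h_1\le h_2-1$; and a non-complete child of height $H-1$ keeps height $H-1$ after receiving one leaf while the donor had height $H$. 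So your case analysis is already complete, and the proof goes through as written once you add this observation.
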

\begin{proof}
	We prove this statement by the induction on the height of $S$.
	
	At first, we prove the base. If $S$ is of height $2$, then all of its subtrees are either empty or of height $1$ or $0$ and, thus, they are weakly-complete. Suppose that $S$ is not weakly-complete. Thus, we deduce that there is a subtree of height $1$ and an empty subtree, so we can perform a push up operation improving the total distance which contradicts the optimality of $T$.
	
	So, now, we may assume by the inductive hypothesis that all the subtrees of $S$ are weakly-complete.
	
	Suppose that $S$ is not weakly-complete. By the induction hypothesis, there are two subtrees of $S$ such that we can perform a push-up operation between them decreasing the total cost which contradicts the optimality of $T$.
\end{proof}

The next corollary is proven using the Claim~\ref{clm:push_up}.
\begin{corollary}
	\label{cor:push_up2}
	Assume that we do a push-up operation from the weakly-complete subtree $T$ of height $h_2$ to a weakly-complete sibling subtree $S$ of height $h_1$ ($h_1 < h_2$, where $h_1$ is calculated after moving the node and $h_2$ is calculated before moving the node). The total distance increases by $O(kn)$.
\end{corollary}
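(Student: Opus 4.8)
The plan is to compute the change in total distance by hand, using the crucial fact that a push-up relocates only a single leaf $x$ and hence leaves the tree structure on the other $n-1$ nodes untouched. Consequently only the $n-1$ distances emanating from $x$ can change, which in the edge-potential language means that only the edges on the old and on the new root-to-$x$ paths change their potential. Recall that in a uniform workload $\mathrm{potential}(e)=s_e(n-s_e)$, where $s_e$ and $n-s_e$ are the sizes of the two components one gets by deleting $e$; the whole argument amounts to summing the changes of these few potentials.

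Concretely, I would fix notation: let $p$ be the common parent of the roots $r_T,r_S$ of the two sibling subtrees, $q$ the old parent of $x$ in $T$, and $q'$ its new parent in $S$. Deleting the edge $(x,q)$ and inserting $(x,q')$ cancels out, since each has potential $n-1$. On the $h_2$ edges of the path $q\to\cdots\to r_T\to p$ (there are $h_2$ of them because $\mathrm{depth}_T(x)=h_2$) the $x$-side loses one node, so each contributes $\Delta\mathrm{potential}(e)=2s_e-n-1$; on the $h_1$ edges of the path $q'\to\cdots\to r_S\to p$ (there are $h_1$ because $\mathrm{depth}_S(x)=h_1$ after the move) the $S$-side gains one node, so each contributes $n-1-2s_e$ with $s_e$ the old size; all remaining edges are untouched. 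Adding up,
\[ \Delta\mathrm{TotalDistance}\;=\;2\Bigl(\sum_{e\in P_{\mathrm{down}}}s_e-\sum_{e\in P_{\mathrm{up}}}s_e\Bigr)-(h_2-h_1)\,n-(h_1+h_2). \]
To finish, I would bound the right-hand side from above. Since $h_1<h_2$ the last two terms are nonpositive and $\sum_{e\in P_{\mathrm{up}}}s_e\ge 0$, so $\Delta\mathrm{TotalDistance}\le 2\sum_{e\in P_{\mathrm{down}}}s_e$. The sizes along $P_{\mathrm{down}}$ are $|V(T)|=A_0,A_1,\dots,A_{h_2-1}$, where $A_d$ is the size of the subtree of $T$ rooted at the depth-$d$ ancestor of $x$. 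Because $T$ is weakly-complete of height $h_2$, such a subtree fits inside a complete $k$-ary tree of height $h_2-d$, so $A_d=O(k^{\,h_2-d})$, whereas $|V(T)|=\Omega(k^{\,h_2-1})$; summing the resulting geometric series over $d$ gives $\sum_{d=0}^{h_2-1}A_d=O(k\,|V(T)|)=O(kn)$, and hence $\Delta\mathrm{TotalDistance}=O(kn)$.

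I expect the last estimate to be the only real obstacle. The bookkeeping of which edges change, and in which direction, is routine, and the pieces not appearing in $P_{\mathrm{down}}$ only help: for $w$ outside $T\cup S$ the distance $d(x,w)$ strictly decreases (by $h_2-h_1\ge 1$), and for $w$ inside $S$ it does not increase either, which is precisely why all the positive change is carried by $P_{\mathrm{down}}$. But the bound on $\sum_d A_d$ genuinely needs the weakly-complete $k$-ary structure: the naive bound $A_d\le|V(T)|$ applied to each of the $h_2=\Theta(\log_k n)$ edges would only give $O(n\log_k n)$, which does not imply $O(kn)$ for all $n$ and $k$; one has to use that along a root-to-leaf path the subtree sizes of a weakly-complete tree decay geometrically.
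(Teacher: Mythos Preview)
Your proof is correct and, at its core, follows the same route as the paper. The paper's proof of this corollary is a one-liner: it cites Claim~\ref{clm:push_up} from the proof of Lemma~\ref{lem:push_up}, which already shows that the positive part of the change, $\Delta_2$, satisfies $\Delta_2\le 8k|V(T)|$; since $\Delta_1,\Delta_3\ge 0$, the total increase is at most $\Delta_2=O(kn)$. Your edge-potential bookkeeping is just a repackaging of that same computation: your $2\sum_{d=0}^{h_2-1} A_d$ coincides (up to lower-order terms) with the paper's $2\sum_i i\,|V(T_i)|$, and the decisive estimate---that subtree sizes along a root-to-leaf path of a weakly-complete $k$-ary tree decay like $k^{h_2-d}$ and hence sum to $O(k|V(T)|)$---is identical in both arguments.
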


\begin{definition}
    A \emph{centroid} of a tree $T$ is a node $c \in V(T)$ such that when removed, $T$ will be split into $m$ subtrees $T_1, \ldots, T_m$ with $|V(T_i)| \leq \frac{|V(T)|}{2}$ for all $i$.
    The \emph{centroid decomposition} is represented by $\{c\} \cup \{T_1, \ldots, T_m\}$. 
\end{definition}

\begin{claim}[Jordan~\cite{jordan1869assemblages}]
Any tree has a centroid decomposition.
\end{claim}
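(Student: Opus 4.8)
The plan is to prove Jordan's theorem by a potential/descent argument on vertices. For a vertex $v \in V(T)$, let $f(v)$ be the size of the largest connected component of the forest $T - v$ obtained by deleting $v$. Since $V(T)$ is finite, some vertex $c$ minimizes $f$; I will show this $c$ is a centroid, i.e.\ $f(c) \le n/2$ where $n = |V(T)|$.

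Suppose for contradiction that $f(c) > n/2$. Then $T - c$ has a (necessarily unique) component $C$ with $|V(C)| > n/2$; let $u$ be the unique neighbour of $c$ inside $C$. I would then examine the components of $T - u$. One of them, say $C'$, contains $c$ and equals $\{c\}$ together with everything outside $C$, so $|V(C')| = n - |V(C)| < n/2$; every other component of $T - u$ is a subtree of $C$ hanging off $u$, hence is contained in $V(C) \setminus \{u\}$ and has size at most $|V(C)| - 1$. Therefore $f(u) \le \max\{\,n - |V(C)|,\ |V(C)| - 1\,\}$.

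The one point requiring care is the arithmetic: I need $|V(C)| - 1 \ge n - |V(C)|$, equivalently $2|V(C)| \ge n + 1$, which follows from $|V(C)| > n/2$ for an integer $|V(C)|$ (both parities of $n$ are immediate). Hence $f(u) \le |V(C)| - 1 < |V(C)| \le f(c)$, contradicting the minimality of $c$. So $f(c) \le n/2$, and $\{c\} \cup \{T_1, \dots, T_m\}$ with $T_1, \dots, T_m$ the components of $T - c$ is the desired centroid decomposition. I expect no real obstacle here; the proof is elementary, and the only subtlety is that $c$ must be a \emph{global} minimizer of $f$ — equivalently one may phrase the argument as: start at any vertex and repeatedly move into the component of size $> n/2$ whenever one exists, observing that $f$ strictly decreases with each step, so the walk must terminate, necessarily at a centroid.
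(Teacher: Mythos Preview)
Your argument is correct and is the standard proof of Jordan's centroid theorem. The paper, however, does not prove this claim at all: it simply states it with a citation to Jordan's 1869 paper and moves on, treating it as a known classical fact. So there is no ``paper's own proof'' to compare against; your descent argument (minimize the largest-component function $f$, then shift toward the oversized component to derive a contradiction) is exactly the textbook justification one would supply if a proof were required.
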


Assume we know the optimal $(k+1)$-degree tree $T$ of size $n$ with the total distance. We make a centroid decomposition of it obtaining a centroid $C$ and $k + 1$ trees $T_1, T_2, \ldots T_{k+1}$. ($T_i$ could be empty.)
From now on we assume that $T$ is rooted at $C$.

\begin{figure}[H]
	\centering
	\includegraphics[scale=0.3]{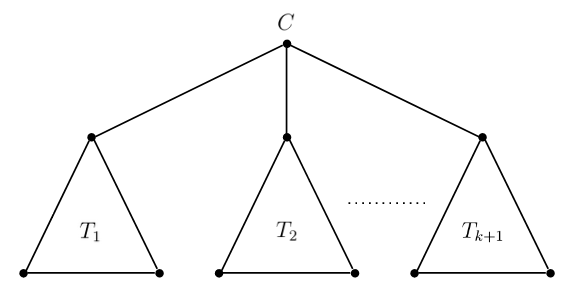}
	\caption{Decomposition of $T$}
\end{figure}

\begin{lemma}
\label{lem:subcentroid}
If we root an optimal tree $T$ at its centroid, then, for each subtree $S = T_i$ its centroid is either a root of $S$ or a child of a root.
\end{lemma}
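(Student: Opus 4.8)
The plan is to argue by contradiction: suppose $S$ is a subtree of the centroid-rooted optimal tree $T$ whose centroid $c_S$ lies at depth $\geq 2$ inside $S$, and show that $T$ cannot be optimal. The intuition is that a centroid sitting deep inside $S$ means $S$ is badly unbalanced in a way that wastes total distance, so a local rearrangement — rerooting $S$ closer to $c_S$, or equivalently redistributing the mass of $S$ among the children of its root — strictly decreases the total distance for the uniform workload. The quantity to track is $\mathrm{TotalDistance}$, which by the remarks preceding Theorem~\ref{thm:arbitrary workload} equals the sum of edge potentials, and for the uniform workload the potential of the edge separating subtrees of sizes $a$ and $b$ is exactly $a\cdot b$ (as noted at the start of Section~\ref{sec:quasiopt}).

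The key steps, in order, are as follows. First I would set up notation: let $r$ be the root of $S$, let $S_1,\dots,S_m$ ($m\le k$) be the children subtrees of $r$, and suppose the centroid $c_S$ of $S$ lies strictly below some $S_j$, i.e. at depth $\geq 2$. By the definition of centroid, removing $c_S$ from $S$ splits it into pieces each of size $\leq |V(S)|/2$; in particular the piece of $S$ containing $r$ (hence all of $S_i$ for $i\neq j$, plus $r$, plus the part of $S_j$ above $c_S$) has size $\leq |V(S)|/2$, which forces $|V(S_j)| \geq |V(S)|/2$ (roughly), so $S_j$ is the heavy child and the other children together with $r$ account for $\leq |V(S)|/2$. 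Second, I would consider the edge $e$ between $r$ and $S_j$: its potential includes the term $|V(S_j)|\cdot(n-|V(S_j)|)$. Now perform the local surgery: detach $S_j$, and instead make the child of $r$ that leads toward $c_S$ be $c_S$'s subtree in a way that moves $c_S$ up — concretely, rotate so that the node on the $r$-to-$c_S$ path becomes a child of $r$ at smaller depth, pushing $r$ and the light siblings down into the freed slot. Third, I would compute the change in total distance as a sum of changes in edge potentials: only the edges along the $r$-to-$c_S$ path and the edges incident to the relocated light subtrees change, and each such change can be written as $(\text{size on one side})\cdot(\text{size on other side})$ before minus after. Because the light side has size $\leq |V(S)|/2 \leq |V(T)|/2 \ll n$ while moving it up by one level subtracts a positive multiple of its size from several potentials, the net change is negative. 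Finally, I would conclude that this contradicts optimality of $T$, so $c_S$ must be $r$ or a child of $r$.

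The main obstacle I anticipate is making the "local surgery" precise as a legal sequence of $k$-ary search tree rotations (Definition~\ref{def:karysearchtree}) that respects the search-order property, and then bookkeeping the edge-potential changes cleanly. A rotation in a $k$-ary tree touches up to $k$ adjacencies, so moving $c_S$ up by one level may require reshuffling all of $r$'s children; I must verify that after the reshuffle the result is still a valid $k$-ary search tree (which it is, since rotations preserve subtree contents and order) and that the sizes entering the potential formula change in the claimed direction. A subtlety is that $c_S$ being a \emph{centroid of $S$} controls balance \emph{within} $S$, but the potential of an edge inside $S$ depends on $n$, not just $|V(S)|$ — however, every edge inside $S$ has the same "outside mass" $n - |V(S)|$ contribution scaled by the size of the $S$-side piece, so the comparison still reduces to comparing sums of products of piece-sizes, and the centroid inequality $|V(S_j)|\geq |V(S)|/2$ is exactly what makes the deep configuration suboptimal. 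I would also need to handle the degenerate cases where some $S_i$ is empty or where $m<k$ (there is a free child slot, which only makes the improving move easier). Once the surgery and the sign of the potential change are pinned down, the contradiction — and hence the lemma — follows.
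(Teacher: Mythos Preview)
Your high-level strategy matches the paper's: assume the centroid of $S$ sits too deep, perform a local rearrangement, and show the sum of edge potentials drops, contradicting optimality of $T$. Where your proposal stays vague --- the precise surgery and the multi-edge potential bookkeeping you flag as the main obstacle --- is exactly where the paper makes a sharper choice that collapses the computation.

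The paper does not rotate along the $r$-to-$c_S$ path. Instead it argues directly that if the root $r$ of $S$ is not a centroid of $S$, then the root $r_1$ of the (unique) heavy child $S_1$ must be. Supposing $r_1$ is also not a centroid forces some grandchild subtree $S_{11}$ of $r$ to satisfy $|V(S_{11})| > |V(S)|/2$. Now simply \emph{swap} the entire subtree $S_{11}$ with one of the light child subtrees, say $S_2$, of $r$. The point of this swap is that exactly \emph{one} edge changes potential, namely $(r, r_1)$: every other edge separates the same two vertex sets before and after. The change in that single potential factors as
\[
(|V(S_{11})| - |V(S_2)|)\cdot\Bigl(|V(R)| + \textstyle\sum_{i\geq 3}|V(S_i)| - \sum_{i\geq 2}|V(S_{1i})| - 1\Bigr),
\]
with $R = T\setminus S$. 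The first factor is positive since $S_{11}$ is heavy and $S_2$ is light. The second factor is positive because $|V(R)| \geq n/2$ --- this is precisely where rooting $T$ at its \emph{global} centroid is used --- while everything subtracted lives strictly inside $S_1$ and hence totals less than $n/2$. So the swap strictly decreases the total distance, contradicting optimality.

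Two further remarks on your plan. First, your concern about preserving the search property under the surgery is a red herring in the uniform-workload setting: the total distance depends only on the unlabeled tree shape, and any $k$-ary tree shape on $n$ nodes can be relabeled into a valid $k$-ary search tree with identical cost. Second, your rotation-based move would alter several edges' side-sizes simultaneously, which is why the bookkeeping looked daunting; the paper's subtree swap is engineered so that only one edge's potential moves, eliminating that difficulty entirely.
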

\begin{proof}
	Let $S$ be a subtree $T_i$ of the optimal tree $T$. Denote its root as $r$.
	
	If all subtrees of $r$ have size $\leq \frac{|V(S)|}{2}$, then $r$ is a centroid of $S$ and the statement holds.
	
	Denote the subtrees of $r$ as $S_{1}, \ldots, S_{k}$.
	
	If $r$ is not a centroid, one of $S_{j}$ is bigger than $\frac{|V(S)|}{2}$. Suppose, for simplicity, it is $S_{1}$. We now prove that the root of $S_{1}$ is a centroid of $S$.
	
	Denote the root of $S_{1}$ as $r_{1}$ and its subtrees as $S_{11}, \ldots, S_{1k}$. The visualisation for the lemma is presented in Figure \ref{fig:centroid decomposition}.
	
	Assume $r_{1}$ is not a centroid. Then, it means that either $\left(\bigcup\limits_{i \in [2, \ldots, k]} S_{i} \right) \cup \{r\}$ is bigger than $\frac{|V(S)|}{2}$ or $S_{1j}$  is bigger than $\frac{|V(S)|}{2}$ for some $j \in [1,\ldots,k]$.
	
	\begin{itemize}
		\item Suppose that $\left(\bigcup\limits_{i \in [2, \ldots, k]} S_{i}\right) \cup \{r\}$ is bigger than $\frac{|V(S)|}{2}$. 
		This is impossible since we already know that $|V(S_{1})| > \frac{|V(S)|}{2}$.
		
		\item $S_{1i}$  is bigger than $\frac{|V(S)|}{2}$ for some $i \in [1,\ldots,k]$. Suppose, for simplicity, this tree is $S_{11}$.
		
		Now, we are going to prove that if we swap $S_{11}$ with any of $S_{i}$, say $S_{2}$ to be certain, the total cost will decrease.
		
		We refer to the total cost expressed in terms of edge potentials. Note that the potential for the edges within $S_i, S_{ij}$ and $R$ does not change. Neither does it change for the edges going out of $S_i$ and $S_{ij}$. So, the only change is in the potential of $(r_{1}, r)$ edge. 
		
		The old value for its potential is
		\begin{align*}
			\left(1 + |V(S_{11})| + \sum\limits_{i = 2}^k |V(S_{1i})|\right) \cdot \\ \cdot \left(|V(R)| + |V(S_{2})| + \sum\limits_{i = 3}^k|V(S_{j})|\right),
		\end{align*}
		
		while the new one is 
		\begin{align*}
			\left(1 + |V(S_{2})| + \sum\limits_{i = 2}^k |V(S_{1i})|\right) \cdot \\ \cdot \left(|V(R)| + |V(S_{11})| + \sum\limits_{i = 3}^k|V(S_{i})|\right).
		\end{align*}
		
		Now we calculate the difference between the potentials:
		\begin{align*}
			\left(|V(S_{11})| - |V(S_{2})|\right)\cdot \left(|V(R)| + \sum\limits_{i = 3}^k|V(S_{i})|\right. - \\
   \left.\sum\limits_{i = 2}^k |V(S_{1i})| - 1\right),
		\end{align*}
		
		which is positive since:
		1)~$|V(S_{11})| > |V(S_{2})|$; and 2)~due to the fact that $C$ is a centroid we know that $|V(R)|$ is bigger than the half of the tree, or in other words:
		$|V(R)| \geq \frac{n}{2} > \sum\limits_{i = 2}^k |V(S_{1i})| + 1$
	\end{itemize}
	\begin{figure}[H]
		\centering
		\includegraphics[scale=0.3]{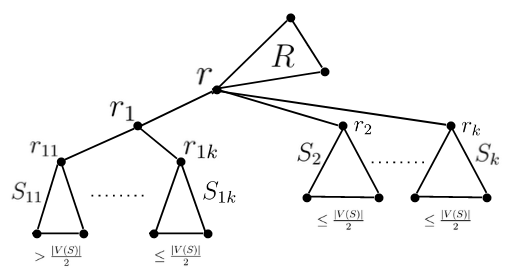}
		\caption{Notation for Lemma \ref{lem:subcentroid}}
		\label{fig:centroid decomposition}
	\end{figure} 
\end{proof}

So we can only have two possibilities for the inner structure of each subtree of an optimal tree (rooted at its centroid).

\begin{figure}[H]
	\centering
	\begin{minipage}[b]{0.35\textwidth}
		\includegraphics[scale=0.4]{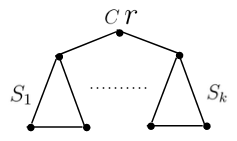}
		\caption{Case 1}
		\label{fig:local_decomposition_v1}
	\end{minipage}
	\hfill
	\begin{minipage}[b]{0.6\textwidth}
		\includegraphics[scale=0.4]{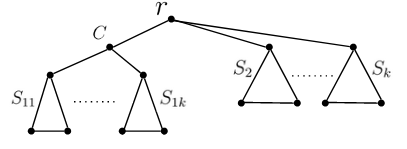}
		\caption{Case 2}
		\label{fig:local_decomposition_v2}
	\end{minipage}
\end{figure}

\begin{corollary}
	\label{cor:subcentroid}
	If we root an optimal $(k+1)$-degree tree $T$ at its centroid, then $|V(S)| \leq \frac{|V(T)|}{2^h}$ holds for each subtree $S$ at level $2h$. 
\end{corollary}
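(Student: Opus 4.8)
The plan is to prove the bound by induction on $h$, descending the tree two levels at a time; two levels is the right granularity because Lemma~\ref{lem:subcentroid} only pins the centroid of a subtree to within its top two levels, so a guaranteed halving of the size appears only after two steps of descent. The base case $h = 0$ is immediate, since the only level-$0$ subtree is $T$ itself and $|V(T)| \le |V(T)|/2^0$. For the inductive step I would reduce everything to the following claim: for every subtree $S$ of $T$, every subtree $S'$ rooted at a node of level $2$ inside $S$ satisfies $|V(S')| \le |V(S)|/2$. Granting this claim, a level-$2(h+1)$ subtree $S'$ of $T$ is a level-$2$ subtree of the unique level-$2h$ subtree $S$ of $T$ that contains it, so $|V(S')| \le |V(S)|/2 \le |V(T)|/2^{h+1}$ by the inductive hypothesis, which closes the induction.

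Proving the claim is where Lemma~\ref{lem:subcentroid} does the work. Let $r$ be the root of $S$ and let $c$ be a centroid of $S$; by Lemma~\ref{lem:subcentroid}, either $c = r$ or $c$ is a child of $r$. If $c = r$, deleting $r$ splits $S$ into the subtrees hanging from the children of $r$, each of size at most $|V(S)|/2$ by the definition of centroid, and every level-$2$ subtree $S'$ of $S$ is contained in one of them. If $c$ is a child $r_1$ of $r$, deleting $r_1$ splits $S$ into the subtrees hanging from the children of $r_1$ together with one further component that contains $r$ and all of $r$'s other children with their subtrees; every one of these pieces has size at most $|V(S)|/2$ by the centroid property. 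A level-$2$ subtree $S'$ of $S$ is rooted at a grandchild of $r$: if that grandchild is a child of $r_1$ then $S'$ is one of the subtrees hanging from $r_1$, and otherwise $S'$ lies entirely inside the component containing $r$. In every case $|V(S')| \le |V(S)|/2$, which proves the claim.

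The step I expect to require the most care is the second case of the claim, where the centroid of $S$ is a child of its root: one must notice that the level-$2$ subtrees of $S$ that do not descend from the centroid $r_1$ are still bounded, because they all lie inside the single ``remainder'' component produced by deleting $r_1$, and that component is size-bounded precisely because $r_1$ is a centroid. The remaining work is routine bookkeeping about which node sits at which level.
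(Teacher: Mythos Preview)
Your proof is correct and follows the same inductive scheme as the paper: induct on $h$, and at the inductive step pass from a level-$2h$ subtree to a level-$2(h+1)$ subtree by invoking Lemma~\ref{lem:subcentroid} to guarantee a halving over two levels of descent. In fact you spell out the key implication more carefully than the paper does: the paper simply asserts ``By Lemma~\ref{lem:subcentroid}, $|V(Q)| \leq |V(P)|/2$'' for a grandchild subtree $Q$ of $P$, whereas you justify this via the case split on whether the centroid of $S$ is its root or a child of its root, and in the latter case you correctly observe that grandchild subtrees not descending from the centroid are bounded because they sit inside the single remainder component.
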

\begin{proof}
	We prove this statement by the induction.
	
	The statement holds for $h = 0$.
	
	Consider a subtree $Q$ at level $2 \cdot (h + 1)$ and a tree $P$ rooted at a grandparent of a root of $Q$. By Lemma \ref{lem:subcentroid}, $|V(Q)| \leq \frac{|V(P)|}{2}$ which by induction hypothesis $\leq \frac{|V(T)|}{2^h \cdot 2} = \frac{|V(T)|}{2^{h + 1}}$
\end{proof}

Combining Corollary~\ref{cor:push_up1} and Corollary~\ref{cor:subcentroid}, we obtain that each subtree of an optimal tree at level $\geq 2\lceil \log_2(9k) \rceil$ must be weakly-complete.

The following lemma is also proved straightforwardly. We suggest the opposite and try to move nodes in between subtrees. We calculate the difference and show that the total cost decreases.
\begin{lemma}
    \label{lem:leaves_movement}
    If there are two neighbouring subtrees of the same height both having their last level not empty and not full, the total distance can be decreased.
\end{lemma}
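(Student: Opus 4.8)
The plan is to reduce this to the push-up machinery already developed, but the obstacle is that Lemma \ref{lem:push_up} requires a size bound ($|V(T)|+|V(S)| \leq \frac{n}{9k}$) and a strict height inequality, neither of which is assumed here — so I cannot invoke it as a black box. Instead I would argue directly via edge potentials, exactly as in the proof of Lemma \ref{lem:subcentroid}. Let $T$ and $S$ be the two neighbouring subtrees of common height $h$, both with last level non-empty and non-full, sharing a parent $p$. Since the last level of $S$ is not full, it has a free slot; since the last level of $T$ is not empty, it has a leaf $\ell$ we can move. A \emph{push-up} moves $\ell$ from $T$ to $S$. I would compute the change in total distance induced by this move, using the fact that total distance is the sum of edge potentials (the identity stated right after Definition \ref{def:edge_potential}), and that the potential of an edge separating subtrees of sizes $a$ and $b$ in a finite uniform workload is $a\cdot b$.

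The key steps, in order: (1) Identify which edge potentials change. Moving $\ell$ one level up (within $T$, from depth $h$ to an internal node) and then into $S$ at depth $h$: the potentials of edges strictly inside $T$ that no longer lie on paths to $\ell$ decrease, the potential of the edge $(p,\text{root}(S))$ increases by roughly $|V(S)|$ worth of shift while $(p,\text{root}(T))$ decreases, and potentials of edges inside $S$ on the path to the new slot appear. (2) Bound the gain from removing $\ell$ from $T$. Because $\ell$ sits at the deepest level of $T$, the path from any node to $\ell$ traverses $h$ edges inside $T$ plus the edge to $p$; removing it strictly decreases each of those $h+1$ potentials, and since $T$ has height $h$ with a non-empty last level, this gain is at least on the order of $h+1$ times a positive quantity. (3) Bound the cost of inserting $\ell$ into $S$. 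Since the last level of $S$ is \emph{not full}, $\ell$ can be placed so that the newly created path to it reuses existing internal structure of $S$ — crucially, it attaches at depth exactly $h$, so the number of new edges on the path to $\ell$ inside $S$ is at most $h$, and the shared edge $(p,\text{root}(S))$ absorbs the rest. (4) Compare: the net change in total distance is (cost in $S$) $-$ (gain in $T$) plus the rebalancing of the two edges at $p$; because $T$ and $S$ have the \emph{same height} and the leaf travels between last levels, the depth-dependent terms cancel and what remains is governed by a size difference of the form $(|V(T)|-|V(S)|-1)$ or its negative — so one of the two directions of push-up is strictly improving. Handling the symmetric choice of direction (if $|V(T)|<|V(S)|$, move a leaf the other way; the strictly-not-full and strictly-not-empty hypotheses on \emph{both} subtrees guarantee a legal move in either direction) closes the argument.

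The main obstacle I anticipate is step (3)–(4): making precise that placing $\ell$ into a non-full last level of $S$ genuinely costs no more in internal-edge potential than the decrease obtained from $T$, since the two trees being "weakly-complete-like" up to their last levels is not literally assumed — the lemma only says the last levels are non-empty and non-full. I would handle this by choosing $\ell$ to be a leaf of $T$ whose removal leaves $T$ still valid and choosing the insertion slot in $S$ to be at maximal depth $h$ (which exists precisely because the last level of $S$ is non-full but at height $h$), so that before-and-after both trees still have height $h$; then the potential accounting is symmetric in $T$ and $S$ except for the multiplicities coming from $|V(T)|$ versus $|V(S)|$, and the sign of the net change is determined. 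If the two sizes happen to be equal, I note that then the move is cost-neutral on the $p$-edges but still strictly decreases internal potentials of $T$ that were "wasted" on a leaf that could have been packed more shallowly — or, more cleanly, one invokes that equal-size equal-height both-partially-filled last levels can be merged into one fuller level on one side, reducing height-weighted potential. This is the delicate case and deserves an explicit short computation in the appendix.
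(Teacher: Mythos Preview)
Your single-leaf edge-potential computation has a concrete error in step~(4). Write out the change carefully: moving one leaf $\ell$ from depth $h$ in $T$ to an open depth-$h$ slot in $S$ changes the total distance by
\[
\Delta \;=\; 2\Bigl[\,\sum_{i=1}^{h-1}(a_i-b_i)\;+\;\bigl(|V(T)|-|V(S)|\bigr)\;-\;h\,\Bigr],
\]
where $a_i$ (resp.\ $b_i$) is the size of the depth-$i$ subtree on the root-to-$\ell$ path in $T$ (resp.\ on the root-to-slot path in $S$). The $\pm(n-1)$ from deleting and creating the bottom edge do cancel, and the two $p$-edges indeed contribute $2(|V(T)|-|V(S)|-1)$; but the internal-path contributions $\sum(a_i-b_i)$ do \emph{not} cancel unless the two trees have identical subtree sizes along the chosen paths. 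So the sign of $\Delta$ is not governed by the size difference alone, and your trichotomy ``one of the two directions is strictly improving'' does not follow. The equal-size case is not the only delicate one --- the whole reduction to $|V(T)|-|V(S)|$ is missing the path-dependent term.

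The paper sidesteps this entirely by a different mechanism. It first passes to a minimal-height pair of such subtrees, which forces each of $T$ and $S$ to already have its last-level leaves packed to one side; in particular the two trees are \emph{identical} above the last level. It then moves leaves \emph{in bulk} (all of the smaller group, or enough to fill the larger side) and argues by symmetry: distances from moved leaves to the rest of the tree and among themselves are unchanged, distances to the two upper parts of $T$ and $S$ simply swap, and only the distances between moved leaves and the stationary last-level leaves strictly decrease. Your single-leaf approach may be salvageable, but it requires controlling $\sum(a_i-b_i)$ via a careful choice of $\ell$ and slot --- that is the missing idea, and it is not obviously easier than the paper's symmetry argument.
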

\begin{proof}
    We consider two such subtrees with the smallest height, in a sense that they have all their leaves as far to the ``left'' as possible. 
    
    Suppose that in each subtree the last level contains at most $m$ leaves. The left subtree has $0 < l < m$ leaves on its last layer and the right subtree has $0 < r < m$ leaves on its last layer. Furthermore, we can assume that $r \leq l$. (The other way around is symmetrical)
    
    We consider two cases: either $l + r \leq m$ (Fig. \ref{fig:leaves_movement1}) or $l + r > m$ (Fig. \ref{fig:leaves_movement2}). 
    \begin{figure}[H]
    	\centering
    	\begin{minipage}[b]{0.45\textwidth}
    		\includegraphics[scale=0.2]{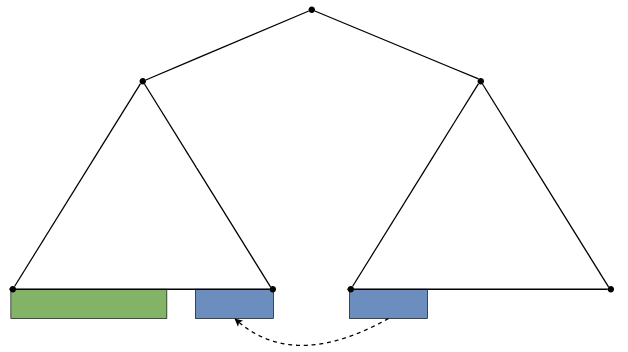}
    		\caption{Case 1}
    		\label{fig:leaves_movement1}
    	\end{minipage}
    	\hfill
    	\begin{minipage}[b]{0.45\textwidth}
    		\includegraphics[scale=0.2]{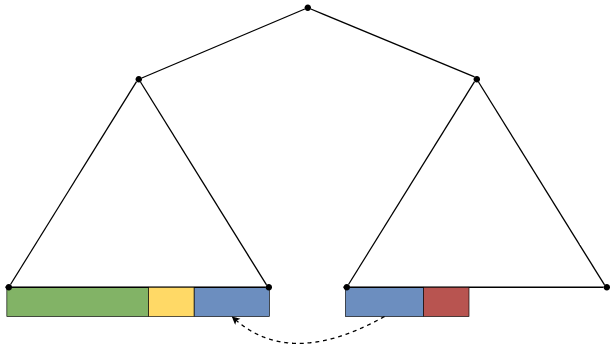}
    		\caption{Case 2}
    		\label{fig:leaves_movement2}
    	\end{minipage}
    \end{figure}
    \begin{itemize}
        \item In the first case, $l + r \leq m$. Consider Figure~\ref{fig:leaves_movement1}.
        Leaves of the left tree are depicted green, leaves of the right tree are depicted blue.
        In this case we move all the leaves from the right tree to the right-most positions in the left tree.
        \begin{itemize}
            \item The total distance among blue leaves did not change.
            \item The total distance between blue leaves and the nodes outside considered trees did not change.
            \item The total distance between blue nodes and the right subtree (without leaves) is now equal to the total distance between blue nodes and the left subtree (without leaves). And vice versa. 
            \item The total distance between blue nodes and green nodes is decreased.
        \end{itemize}
        
        \item In the second case, $l + r > m$. Consider Figure~\ref{fig:leaves_movement2}.
        We move $m - l$ left-most leaves from the right tree to the left tree. 
        Left-most $m-r$ leaves of the left tree are depicted green. Other leaves of the left tree are depicted yellow. Moved leaves of the right tree are depicted blue. Other leaves of the right tree ($r - (m - l)$ of them) are depicted red.
        \begin{itemize}
            \item The total distance among blue leaves did not change.
            \item The total distance between blue leaves and the nodes outside considered trees did not change.
            \item The total distance between blue nodes and the right subtree (without leaves) is now equal to the total distance between blue nodes and the left subtree (without leaves). And vice versa. 
            \item The total distance between blue nodes and green nodes is decreased.
            \item The total distance between blue nodes and red nodes is now equal to the total distance between blue nodes an yellow nodes. And vice versa.
        \end{itemize}
    \end{itemize}
\end{proof}

\begin{lemma}
If a tree $T$ has all its levels filled except possibly the last one, and there are no two neighbouring subtrees of the same height both having their last level not empty and not full, then, we can change the order on the children of each node of $T$, so, that all the leaves of the last level are placed as left as possible, one by one.
\end{lemma}
\begin{proof}
    We prove this statement by an induction on the height.
    If the tree has height one, we simply can make the ``leftmost'' numbering on the non-empty children.
    Now, we discuss the case when the tree has height $h > 1$.
    By the statement of the lemma, there can be at most one child of a root with its last level not fully filled and not empty. By the induction hypothesis, we assume that its leaves on the last level are placed as ``left'' as possible. We now order the sons of the root in the following manner. First, we add the subtrees which have their last level full from left to right. Note, that they can be in arbitrary order. Then, we place the subtree with the last level not full and not empty (it might not exists, but if it exists, there is only one such subtree). By the induction statement, all its leaves are at the left. Finally, we place the subtrees which have their last level empty, in arbitrary order. We got exactly the ``leftmost'' position of leaves. 
\end{proof}

\begin{theorem}
	\label{thm:flatten}
	The difference in the total distance between an optimal $(k+1)$-degree tree $T$ and our centroid $(k+1)$-degree tree is $O(n^2k\log k)$.
\end{theorem}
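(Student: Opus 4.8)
The plan is to transform an optimal tree $T$ (rooted at its centroid) into the quasi-optimal tree in a bounded number of structural steps, each of which changes the total distance by at most $O(kn)$, and then conclude that the cumulative change is $O(n^2 k \log k)$. First I would split the tree into the ``deep'' part (subtrees rooted at level $\geq 2\lceil \log_2(9k)\rceil$) and the ``shallow'' part (everything above). By the combination of Corollary~\ref{cor:push_up1} and Corollary~\ref{cor:subcentroid} already established, every deep subtree of $T$ is already weakly-complete, so the structural discrepancy between $T$ and the quasi-optimal tree is confined to the shallow part plus the way weakly-complete deep subtrees of differing heights hang off it. The shallow part has only $O(k^{\log_2(9k)}) = \mathrm{poly}(k)$ nodes, which is $O(1)$ for fixed $k$; more carefully, I would bound the number of nodes in the top $2\lceil\log_2(9k)\rceil$ levels and argue that rearranging them costs $O(n^2 k \log k)$ in total, since each single adjacency change (a rotation, a push-up, or a leaf move) perturbs only $O(1)$ edge potentials and each edge potential is at most $|V(S)|\cdot|V(T\setminus S)| \le n^2/4$.

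Next I would carry out the reshaping of the deep part. The deep subtrees are already weakly-complete, but siblings may have different heights, and the last level may not be grouped to the left globally. I would apply Lemma~\ref{lem:leaves_movement} repeatedly to equalize the occupancy of last levels among neighbouring subtrees of equal height, and Corollary~\ref{cor:push_up2} (push-ups between weakly-complete siblings of different heights) to drive the tree toward a state where all levels except the last are full. Each push-up and each leaf move costs $O(kn)$ by Corollary~\ref{cor:push_up2} and Lemma~\ref{lem:leaves_movement}'s quantitative version; the number of such operations needed is $O(n)$ (each one is essentially a unit rebalancing step and there are $O(n)$ leaves on the last level to relocate). That gives $O(n) \cdot O(kn) = O(n^2 k)$ for this phase. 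Finally, once all non-last levels are full and neighbouring equal-height subtrees have no ``half-full'' clashes, the Remark lets me reorder children freely so that all last-level leaves are pushed to the left — this is a relabeling that changes no potentials (it only permutes subtrees of equal structure), hence is free.

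The bookkeeping to watch is where the extra $\log k$ factor comes from: it enters through the depth $2\lceil \log_2(9k)\rceil$ of the shallow region, which bounds how far a ``misplaced'' subtree must be shifted or how many rotation steps are needed to normalize the top of the tree. I would set up the count as: (number of shallow-region fix-up operations) $\times$ (cost per operation) $= O(\mathrm{poly}(k) \cdot \log k) \times O(n^2)$ — but this is too lossy; the cleaner route is to observe that transforming any $k$-ary tree on $m$ nodes into any other costs $O(mn)$ total distance change (each rotation touches $O(k)$ edges of potential $\le n^2$, wait — I must be careful here), and the shallow region together with the boundary to the deep region involves $O(n \log k)$ relevant edges whose potentials sum-change telescopes to $O(n^2 k \log k)$. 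The main obstacle, and the step I expect to spend the most care on, is exactly this accounting: bundling the per-operation $O(kn)$ bounds (from Corollary~\ref{cor:push_up2} and Lemma~\ref{lem:leaves_movement}) over a sequence of $O(n \log k)$ operations without double-counting potential changes on shared edges, and justifying that $O(n\log k)$ — rather than $O(n^2)$ — operations suffice, which is where the weakly-completeness of the deep part (so that only the $O(\log k)$-deep boundary needs touching) does the real work.
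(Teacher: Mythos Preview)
Your overall strategy---transform the optimal tree into the quasi-optimal tree using push-ups and bound each push-up by $O(kn)$ via Corollary~\ref{cor:push_up2}---is the same as the paper's. But the counting of how many push-ups are needed, which you correctly flag as the crux, is not in place, and the shallow/deep split you propose is not the right organizing principle.

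The paper's argument is cleaner than any of the routes you sketch. It does \emph{not} rearrange the shallow nodes themselves, nor does it treat the deep part as a separate phase. Instead, it processes levels $l = 2\lceil \log_2(9k)\rceil, \, 2\lceil \log_2(9k)\rceil - 1, \ldots, 0$ in order. At level $l$, each subtree $S$ rooted at that level has all of its children (subtrees at level $l+1$) already weakly-complete, either because $l+1 \ge 2\lceil\log_2(9k)\rceil$ (Corollaries~\ref{cor:push_up1} and~\ref{cor:subcentroid}) or because level $l+1$ was processed in the previous round. One then performs push-ups among these children until $S$ itself is weakly-complete. The key counting claim is that \emph{at each level, each node is moved at most once}: a leaf is pushed from a tallest child to a shortest child and then stays put for the remainder of that level. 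Since there are only $2\lceil\log_2(9k)\rceil = O(\log k)$ levels to process, each of the $n$ nodes participates in at most $O(\log k)$ push-ups overall, and each push-up costs $O(kn)$ by Corollary~\ref{cor:push_up2}. That gives $O(n \cdot \log k \cdot kn) = O(n^2 k \log k)$ directly.

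Your attempt to locate the $\log k$ in ``$O(n\log k)$ relevant edges'' or in ``$\mathrm{poly}(k)\cdot \log k$ shallow fix-up operations'' does not lead anywhere: the $\log k$ is simply the number of levels over which the level-by-level sweep runs. Also note that the final leaf-reshuffling phase (Lemma~\ref{lem:leaves_movement}) \emph{decreases} the total distance, so it contributes nothing to the upper bound---you should use it only to reach the quasi-optimal shape, not as part of the cost accounting.
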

\begin{proof}
	Our plan is to reconfigure $T$ into the centroid tree while controlling the increase of the total distance.
	
	We push-up some nodes to ensure that all the subtrees at level $l$ are weakly-complete starting from $l = 2\lceil \log_2(9k) \rceil$ and up to $l = 0$ (the whole tree).
	
	Suppose we want to make a subtree $S$ at level $l$ weakly-complete. Since we go through levels decreasingly, we can argue that all the subtrees of $S$ are already weakly-complete (this holds for $l = 2\lceil \log_2(9k) \rceil$).
	
	If $S$ is not weakly-complete, it means that there are two subtrees of $S$ with height difference at least $2$. So, we take the subtree with the biggest height and the subtree with the smallest height and perform a push-up operation between them.
	
	We act in this manner while there are two subtrees of height difference $\geq 2$. Once there are none, we say that $S$ is weakly-complete by definition.
	
	When processing a certain level, each node is moved at most once (in its subtree), so each node is moved no more then $2\lceil \log_2(9k) \rceil$ times, thus, by Corollary~\ref{cor:push_up2}, the total cost change is $O(n^2k\log k)$: at most $n$ nodes move $O(\log k)$ times each increases by $O(nk)$.
	
	And the last step would be to reshuffle leaves on the last level, so they are as far left as possible, so we get a centroid tree.
	
	Assume there are two neighbouring subtrees such that their last level is not empty and not full. By Lemma \ref{lem:leaves_movement}, we can decrease the cost by moving leaves from the smaller subtree to the bigger one. 
	
    We perform those movements until our tree becomes the centroid one. 
\end{proof}

\begin{theorem}
	Assuming $k$ is a constant, the total distance in the optimal $(k+1)$-degree tree $T$ is $\Omega(n^2\log n)$.
\end{theorem}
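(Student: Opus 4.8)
The plan is to show that already the sum of distances from a single fixed node (say, the root) to all other nodes forces a $\Omega(n^2 \log n)$ lower bound, since every query pair contributes at least the distance of one endpoint to some reference, and more robustly, to bound the total distance from below by the total pair-distance in a perfectly balanced $k$-ary tree.

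First I would recall that, for a uniform (finite) workload, each unordered pair $(u,v)$ is requested once, so $\mathrm{TotalDistance}(D,T) = \sum_{\{u,v\}} d_T(u,v)$, the sum of all pairwise distances in $T$. Using the edge-potential reformulation from the paragraph before Theorem~\ref{thm:arbitrary workload}, this equals $\sum_{e \in E(T)} |V(S_e)|\cdot|V(T\setminus S_e)|$, where removing $e$ splits $T$ into a part of size $|V(S_e)|$ and its complement of size $n - |V(S_e)|$. So I need a lower bound on $\sum_{e} s_e (n - s_e)$ over all $k$-ary trees on $n$ nodes, where $s_e$ ranges over the subtree sizes.

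The key step is a depth/counting argument. In any $k$-ary tree, the number of nodes at depth $\le d$ is at most $\frac{k^{d+1}-1}{k-1} = O(k^d)$, so at least $n/2$ nodes have depth $\Omega(\log_k n)$; hence $\sum_{v} d_T(v,\mathrm{root}) = \Omega(n \log_k n)$. Summing $d_T(v,\mathrm{root})$ as a sum of edge potentials on the root-to-$v$ paths shows $\sum_e s_e \ge \sum_v d_T(v,\mathrm{root}) = \Omega(n \log_k n)$ (each edge $e$ is counted once for every node in the ``far'' side $S_e$ relative to the root, i.e. $s_e$ times if $S_e$ is the subtree hanging below $e$). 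To upgrade $\sum_e s_e$ to $\sum_e s_e(n-s_e)$, I split the edges: if $s_e \le n/2$ then $s_e(n-s_e) \ge \tfrac{n}{2} s_e$; the edges with $s_e > n/2$ are exactly those on the path from the root to the centroid-like heavy part, and there can be at most $O(\log n)$ such nested ``heavy'' subtrees only if we orient $S_e$ as the smaller side — in fact, if for every edge we let $s_e = \min(|V(S_e)|, n-|V(S_e)|)$ be the smaller side, then $s_e(n-s_e) \ge \tfrac{n}{2} s_e$ always, and the sum $\sum_e s_e$ (smaller-side sizes) is still $\Omega(n\log_k n)$ by running the same depth argument from whichever endpoint of $e$ lies on the smaller side, or more cleanly by noting $\sum_e \min(s_e, n-s_e) \ge \tfrac12 \sum_e s_e^{(\mathrm{rooted})} \ge \tfrac12 \sum_v d_T(v,\mathrm{root})$. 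This yields $\mathrm{TotalDistance}(D,T) \ge \tfrac{n}{2}\cdot\Omega(n\log_k n) = \Omega(n^2 \log_k n)$, and since $k$ is constant $\log_k n = \Theta(\log n)$, giving $\Omega(n^2\log n)$.

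The main obstacle I anticipate is handling the edges where $|V(S_e)| > n/2$ cleanly, i.e. making sure the ``smaller side'' bookkeeping is airtight so that $\sum_e \min(s_e,n-s_e)$ is genuinely $\Omega(n\log n)$ rather than collapsing — a sloppy orientation could let many edges have tiny $\min(s_e,n-s_e)$. The safe route is to fix the root, set $a_e = $ size of the subtree strictly below $e$, observe $\sum_e a_e = \sum_v \mathrm{depth}(v) = \Omega(n\log_k n)$ by the branching bound, and then use $a_e(n-a_e) \ge \tfrac{n}{2}\min(a_e, n-a_e) \ge \tfrac{n}{2}\cdot\tfrac{a_e(n-a_e)}{n} $ — wait, more directly: since at least $n/2$ vertices have depth $\ge c\log_k n$, and for each such vertex $v$ the root-to-$v$ path uses $\ge c\log_k n$ edges each with $a_e \ge 1$ and $n - a_e \ge n/2$ for the first edge, one gets a per-vertex contribution bound; I would formalize it as $\sum_e a_e(n-a_e) \ge \sum_{e:\,a_e\le n/2} a_e \cdot \tfrac n2 \ge \tfrac n2\big(\sum_e a_e - \sum_{e:\,a_e>n/2} a_e\big)$, and bound $\sum_{e:\,a_e>n/2} a_e \le n\cdot(\#\text{such edges})$, with the heavy edges forming a single root path of length $\le$ height, so this correction is lower-order once multiplied out. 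This completes the bound.
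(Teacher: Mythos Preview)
Your edge-potential route is sound in spirit and genuinely different from the paper's argument. The paper roots $T$ at its centroid, argues that (at least) two of the resulting subtrees each contain $\Omega(n/k)$ nodes, observes that any $k$-ary subtree on $\Omega(n)$ nodes must place $\Omega(n)$ of them at depth $\Omega(\log_k n)$, and then counts the $\Omega(n^2)$ cross-pairs between these two deep sets, each pair at distance $\Omega(\log n)$. You instead try to lower-bound $\sum_e a_e(n-a_e)$ via $\sum_e a_e=\sum_v \mathrm{depth}(v)=\Omega(n\log_k n)$, which is a cleaner, more global computation.

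The gap is in the bridge from $\sum_e a_e$ to $\sum_e a_e(n-a_e)$. The inequality $\sum_e \min(a_e,n-a_e)\ge \tfrac12\sum_e a_e^{(\mathrm{rooted})}$ is asserted without proof (and fails edge-by-edge), and your fallback --- that the heavy edges with $a_e>n/2$ form a single root path ``of length $\le$ height, so this correction is lower-order'' --- does not work either: for an arbitrary root the height of a $k$-ary tree can be $\Theta(n)$ (take a bare path), so $\sum_{a_e>n/2}a_e$ can be $\Theta(n^2)$, the same order as $\sum_e a_e$ itself, and the difference $\sum_e a_e-\sum_{a_e>n/2}a_e$ need not be $\Omega(n\log n)$.

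The one-line repair is to choose the root to be a centroid $c$. By the centroid property every below-subtree then satisfies $a_e\le n/2$, so there are \emph{no} heavy edges; hence $a_e(n-a_e)\ge \tfrac{n}{2}a_e$ for every $e$, while $\sum_e a_e=\sum_v d_T(v,c)=\Omega(n\log_k n)$ by your branching bound (the centroid has degree $\le k{+}1$, all other nodes $\le k$). This gives $\sum_e a_e(n-a_e)\ge \tfrac{n}{2}\cdot\Omega(n\log_k n)=\Omega(n^2\log n)$ directly. Both proofs thus end up using the centroid, but for different reasons: the paper uses it to exhibit two large pools of deep nodes, while your (repaired) argument uses it only to eliminate the heavy-edge obstruction.
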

\begin{proof}
	We root tree $T$ by its centroid $C$. At least two subtrees of $T$, $T_i$ and $T_j$, have at least $\frac{n}{2k}$ nodes. Otherwise, $C$ is not the centroid. Each such $T_i$ has $\Omega(n)$ nodes at levels $\geq \log_k n - 2$. Thus, the total pairwise distance between these nodes is $\Omega(n^2\log n)$.
\end{proof}

\begin{remark}
We can get $k$-ary search tree out of $(k+1)$-degree centroid tree by rooting at some leaf and setting the labels correspondingly. We name such a tree~--- a \emph{centroid $k$-ary search tree}.

Since we consider the uniform workload we know that our centroid tree has the total cost of requests close to the optimal, i.e., misses by at most $O(n^2)$ while the total optimal cost is $\Omega(n^2 \log n)$. Thus, our centroid $k$-ary search tree has
an approximation ratio $1+O(\frac{1}{\log n})$.

\end{remark}

\begin{theorem}
The centroid $k$-ary search tree can be built in $O(n)$.
\end{theorem}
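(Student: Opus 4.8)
The plan is to give an explicit construction that produces the quasi-optimal tree of degree $k+1$ on $n$ nodes and runs in linear time, observing that every parameter of the target tree is determined by simple arithmetic on $n$ and $k$. First I would compute the global shape: the root has $k+1$ weakly-complete $k$-ary subtrees, each of which must receive either $\lceil (n-1)/(k+1) \rceil$ or $\lfloor (n-1)/(k+1) \rfloor$ nodes so that the sizes are as equal as possible, with the larger subtrees placed to the left. Given a target size $s$ for a weakly-complete $k$-ary subtree, its height $h$ is $\lceil \log_k (s(k-1)+1) \rceil - 1$ (the unique $h$ with $\frac{k^{h}-1}{k-1} < s \le \frac{k^{h+1}-1}{k-1}$), all levels $0,\dots,h-1$ are completely filled, and the remaining $s - \frac{k^{h}-1}{k-1}$ nodes form the (partial) last level, grouped to the left. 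All of these quantities are computed with $O(\log_k n)$ arithmetic operations.

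Next I would fill in the keys. Since the tree is a $k$-ary \emph{search} tree, once the shape is fixed the key assignment is forced: an in-order traversal of the shaped tree must output $1,2,\dots,n$. So the construction is a single recursive procedure $\textsc{Build}(\textit{range } [a,b], \textit{target size } s)$ that (i) determines the shape parameters for a weakly-complete $k$-ary tree on $s$ nodes as above, (ii) partitions the $s-1$ non-root nodes among the (at most) $k$ children with sizes differing by at most one (leaves to the left), (iii) recursively builds the leftmost child on the leftmost sub-range, places the next key as the root's child-separator in in-order position, and continues left to right. The top-level call wraps this with the degree-$(k+1)$ root: split $[1,n]$ into the root key plus $k+1$ consecutive ranges of near-equal length and call $\textsc{Build}$ on each. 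Each node is visited exactly once and does $O(1)$ work amortized (the per-node shape arithmetic can be passed down as a parameter rather than recomputed), so the total time is $O(n)$.

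The step I expect to require the most care is bookkeeping the ``leaves grouped to the left'' condition consistently across two nested levels of approximate balancing — first among the $k+1$ top subtrees, then recursively among the $k$ children inside each weakly-complete subtree — so that the resulting global arrangement actually matches the figure (all last-level leaves as far left as possible, one by one) rather than merely being weakly-complete level by level. Concretely, one must check that distributing sizes ``larger-to-the-left'' at every recursion level produces exactly the leftmost leaf placement; this follows from the \emph{Remark} preceding Theorem~\ref{thm:flatten}, which already establishes that such a reordering is always possible, so it suffices to argue the recursive construction realizes precisely that ordering. The remaining claims — that the construction is well-defined (the height and last-level formulas are consistent), that it terminates, and that it touches each key once — are routine, and the $O(n)$ bound is immediate from the one-visit-per-node structure of the recursion together with the $O(\log_k n)$ precomputation of shape parameters.
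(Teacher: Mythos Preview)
The paper actually states this theorem without proof, so there is nothing to compare against; I can only evaluate your argument on its own merits. Your overall plan---fix the shape of the quasi-optimal tree by arithmetic on $n$ and $k$, then label by a single in-order pass---is the right one and does give $O(n)$. The problem is that your shape computation is wrong.

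You claim the $k+1$ top subtrees get sizes $\lceil (n-1)/(k+1)\rceil$ or $\lfloor (n-1)/(k+1)\rfloor$, and recursively that the $k$ children get sizes differing by at most one. This is \emph{not} what ``all levels filled except possibly the last, with last-level leaves grouped to the left'' means. Take $k=2$, $n=15$: the whole tree has $1+3+6=10$ nodes on levels $0,1,2$ and $5$ nodes on level~$3$, and each top subtree has $4$ slots on that last level; leftmost filling gives subtree sizes $7,4,3$. Your rule gives $5,5,4$. The resulting tree is still weakly-complete level by level, but two sibling subtrees of height~$2$ both have their last level neither empty nor full, so the hypothesis of the Remark you invoke fails and no reordering of children can push all leaves to the left. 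The two trees are not even isomorphic as unlabelled trees (the level-$2$ degree multisets are $\{1,1,1,1,1,0\}$ versus $\{2,2,1,0,0,0\}$), and a direct edge-potential calculation gives total distances $358$ versus $346$; your tree is strictly worse and is not the quasi-optimal tree of the definition. The same phenomenon happens inside a single weakly-complete subtree: for $k=2$, $s=5$ the leftmost-leaf shape has children of sizes $3$ and $1$, not $2$ and $2$.

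The fix is easy and preserves your $O(n)$ bound: do not balance sizes. Instead compute the global height $H$ once, give every subtree its full complement of $\frac{k^{H-1}-1}{k-1}$ internal nodes, and then pour the remaining last-level leaves into the subtrees from left to right, each subtree absorbing up to $k^{H-1}$ before the next one starts. Pass the same ``remaining leaves'' counter down the recursion. Each node is still visited once, so the running time stays linear.
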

\begin{proof}
At first, we simply build the $(k+1)$-degree centroid tree (we can do this recursively, since we know the sizes of all subtrees). Then, we root it by some leaf and calculate the sizes of all subtrees of this rooted tree. Finally, we need to make this tree to satisfy search property~--- for that we go recursively from the chosen root and set the labels using the precomputed sizes of subtrees. All these stages work in $O(n)$.
\end{proof}
Finally, we show that the full $k$-ary tree also has total distance close to the cost of the optimal tree.

\begin{lemma}
    The total distance in the full $k$-ary tree and the total distance in the centroid $(k+1)$-degree tree are both $n^2\log_kn + O(n^2)$. In other words, they are close in the total cost to the optimal one by $O(n^2)$.
\end{lemma}
\begin{proof}
    Recall that the total distance can be calculated as $\sum\limits_{e \in E} |V(T_e^1)| \cdot |V(T_e^2)|$, where $T_e^1$ and $T_e^2$ are trees that are left when $e$ is removed.

    We first do the calculation for the full $k$-ary tree. Let $l$ be the number of edge levels in the $k$-ary tree on $n$ vertices. There are $k^i$ edges on $i$-th level for $i < l$. We might not care about the edges on the last level, because they contribute at most $n^2$ to the total sum. So the total sum in $k$-ary tree is
\begin{footnotesize}
    \begin{align*}
        \sum\limits_{i = 1}^{l - 1}k^i \cdot \frac{n}{k^i}\cdot\left(n - \frac{n}{k^i}\right) + O(n^2) = n^2 \cdot \sum\limits_{i = 1}^{l -1}\left(1 - \frac{1}{k^i}\right) + O(n^2) = \\ = n^2l + O(n^2) = n^2\log_kn + O(n^2)
    \end{align*}
\end{footnotesize}

    Now let's analyze $(k+1)$-degree tree. We follow the same logic but with $l'$ being the number of levels in the tree.
\begin{footnotesize}
    \begin{align*}
        (k+1) \cdot \frac{n}{k + 1} \cdot (n - \frac{n}{k + 1}) + \\ + \sum\limits_{i = 1}^{l' - 2}(k + 1)k^i \cdot \frac{n}{(k + 1) k^i}\cdot(n - \frac{n}{(k + 1) \cdot k^i}) + O(n^2) =\\ n^2 \cdot \left(\frac{1}{k + 1} + \sum\limits_{i = 1}^{l' - 2}\left(1 - \frac{1}{(k + 1) \cdot k^i}\right)\right) + O(n^2) = n^2l' + O(n^2) = \\ = n^2\log_kn + O(n^2).
    \end{align*}
\end{footnotesize}
    The last equality is since $l' = 1 + \log_k(\frac{n}{k+1}) = \log_kn + O(1)$
\end{proof}

\begin{remark}
\label{rem:expOptCentroid}
The results of the last Lemma show that the full and centroid trees are close to the optimal. However, in the uniform workload the centroid tree should have better total cost, since we split in the centroid vertex by $k+1$ balanced subtrees.
In our experiments, we found that our \emph{centroid} $k$-ary search tree \textbf{is indeed optimal} for all $n$ less than $10^3$ when $k$ is up to $10$.
\end{remark}

\end{document}